\newcommand{\mpar}[2]{\marginpar{\raggedright\tiny#2~[#1]}}
\renewcommand{\mpar}[2]{}
\newtheorem{theorem}{Theorem}[section]
\newtheorem{lemma}[theorem]{Lemma}
\newtheorem{proposition}[theorem]{Proposition}
\newtheorem{corollary}[theorem]{Corollary}
\theoremstyle{definition}
\newtheorem{example}[theorem]{Example}
\theoremstyle{definition}
\newtheorem{definition}[theorem]{Definition}
\theoremstyle{definition}
\newtheorem{remark}[theorem]{Remark}
\newtheorem*{remark*}{Remark}
\newcommand\MD{\mathop{\rm MD}\nolimits}
\newcommand\GL{\mathop{\rm GL}}
\newcommand\Sp{\mathop{\rm Sp}}
\newcommand\id{\mathop{\rm id}\nolimits}
\newcommand\Tr{\mathop{\rm Tr}\nolimits}
\let\epsilon\varepsilon
\newcommand\R{\mathbb{R}}
\newcommand{\lan}{\langle}
\newcommand{\ra}{\rangle}
\newcommand{\ten}{\otimes}
\newcommand{\vphi}{\varphi}
\newcommand{\oten}{\bar{\otimes}}
\newcommand{\ep}{\varepsilon}
\newcommand{\BLTRN}[1][n]{\mathcal{B}(L_2(\mathbb{R}^{#1}))}
\newcommand{\TLTRN}[1][n]{\mathcal{T}(L_2(\mathbb{R}^{#1}))}
\newcommand{\LTRN}[1][n]{L_2(\mathbb{R}^{#1})}
\newcommand{\LTR}{L_2(\mathbb{R})}
\newcommand{\Norm}[1]{\lVert#1\rVert}
\newcommand{\cbnorm}[1]{\lVert#1\rVert_{\cb}}
\newcommand{\hs}{\hskip2pt}
\newcommand{\smallmat}[1]{\left(\begin{smallmatrix}#1\end{smallmatrix}\right)}
\newcommand{\qand}{\quad\text{and}\quad}
\newcommand\cb{\mathop{\rm cb}}
\newcommand{\cl}[1]{\mathcal{#1}}
\newcommand{\bb}[1]{\mathbb{#1}}
\begin{document}

\title[Private Algebras And Infinite-Dimensional Complementarity]{Private Algebras In Quantum Information And Infinite-Dimensional Complementarity}

\mpar{v5}{Changed title: thoughts?}
\mpar{v6}{\vspace{10pt} Looks good. Shall we say \lq quantum information theory'?}
\mpar{v7}{Done, with short version for headings}
\mpar{v9}{Changed title. I think it's important we have both ''private algebras'' and ''complementarity'' in there somewhere.}

\author[J. Crann]{Jason Crann}
\address{$^1$School of Mathematics \& Statistics, Carleton University, Ottawa, ON, Canada H1S 5B6}
\address{$^2$Universit\'{e} Lille 1 - Sciences et Technologies, UFR de Math\'{e}matiques, Laboratoire de Math\'{e}matiques Paul Painlev\'{e}
- UMR CNRS 8524, 59655 Villeneuve d'Ascq C\'{e}dex, France}
\email{jason\_crann@carleton.ca}

\author[D. W. Kribs]{David W. Kribs}
\address{$^1$Department of Mathematics \& Statistics, University of Guelph, Guelph, ON, Canada N1G 2W1}
\address{$^2$Institute for Quantum Computing, University of Waterloo, Waterloo, ON, Canada N2L 3G1}
\email{dkribs@uoguelph.ca}

\author[R. H. Levene]{Rupert H. Levene}
\address{School of Mathematical Sciences, University College Dublin, Belfield, Dublin 4, Ireland}
\email{rupert.levene@ucd.ie}

\author[I. G. Todorov]{Ivan G. Todorov}
\address{Pure Mathematics Research Centre, Queen's University Belfast, Belfast BT7 1NN, United Kingdom}
\email{i.todorov@qub.ac.uk}

\date{14 August 2015}

\maketitle

\begin{abstract}
  We introduce a generalized framework for private quantum codes using
  von Neumann algebras and the structure of commutants. This leads
  naturally to a more general notion of complementary channel, which
  we use to establish a generalized complementarity theorem between
  private and correctable subalgebras that applies to both the finite and
	infinite-dimensional settings. Linear bosonic channels are considered and
	specific examples of Gaussian quantum channels are given to illustrate the new framework together
  with the complementarity theorem.
\end{abstract}

\mpar{v9}{Added mention in Abstract of the comp thm applying to both fin and inf cases.}	

\section{Introduction}

One of the most basic notions in quantum privacy is the private quantum code. Arising initially as the quantum
analogue of the classical one-time pad, they were first called private quantum channels and investigated for optimal encryption
schemes \cite{amtdw,br03}. The subject has grown considerably over the past decade and a half, with related applications in quantum secret
sharing \cite{cgs,cgl} and the terminology ``private quantum subsystems'' taking hold as part of work on the theory of private shared
reference frames \cite{bhs,brs}. In recent years, focus in the subject has turned to investigating relevant properties of completely
positive maps. This has led to connections established with quantum error correction \cite{kks}, discussed in more detail below, as well as
algebraic conditions characterizing private subsystems and new, surprisingly simple examples that suggest private subsystems are more
ubiquitous than previously thought \cite{jklp,jklp1}. These more recent works, along with \cite{ckpp}, have also suggested deeper connections with the theory of
operator algebras, opening up the possibility of extending the subject to infinite-dimensional Hilbert spaces.

\mpar{v9}{Reworked introduction, including added 1st paragraph.}	

From a different but related direction, throughout the development of quantum theory, the notion of complementarity
has played a fundamental role in the interpretation of quantum measurements,
providing, for instance, the theoretical basis behind quantum state tomography. At the level of
quantum channels, an appropriate notion of complementarity has been formulated and shown to be vital for understanding
their overall structure \cite{hol05,kmnr}. An underlying feature of complementarity is the trade-off between
information and disturbance. For finite-dimensional quantum channels,
this trade-off was quantified in \cite{ksw}, and was used to establish
a complementarity theorem between private and correctable
subsystems for a channel and its complementary channel \cite{kks}.

As there is a more general framework for (infinite-dimensional)
quantum error correction at the level of von Neumann algebras \cite{b,bkk1,bkk1a,bkk2},
a natural question is to seek a generalized notion of private quantum codes that is also viable in the infinite-dimensional
setting, and for which a suitable complementarity theorem holds. Using von Neumann algebras
and the structure of commutants, in this paper we introduce a generalized framework
for private quantum codes which may be seen as the complementary analogue of so-called operator
algebra error correction, resulting in a natural notion of ``private algebras''. This in turn leads to a more
general notion of complementary channel, and we establish a generalized complementarity
theorem for arbitrary dimensions in the new framework. As a corollary, we also obtain
a structure theorem for correctable subalgebras that generalizes a
finite-dimensional result \cite{jk}. We finish by illustrating the framework and concepts for infinite-dimensional
linear bosonic channels and a specific class of Gaussian quantum channels \cite{h2,h}.

The outline of the paper is as follows. In Section \ref{s_prel}, we
discuss the necessary preliminaries on infinite-dimensional channels
and von Neumann algebras. We then introduce our
generalized framework for private quantum codes in Section \ref{s_pacs},
and discuss their basic properties and examples. Section \ref{s_ccs} contains the
generalized complementarity theorem and its aforementioned application.
In Section \ref{s_palbc}, we study
explicit examples of linear bosonic and Gaussian quantum channels
which illustrate the new framework along with the complementarity
theorem. We end with a conclusion summarizing the results of the paper
and an outlook on future work.

\section{Preliminaries}\label{s_prel}

Let $S$ be a (not necessarily finite-dimensional) Hilbert space.  We
assume that the inner product is linear in the second variable
and denote by $\cl B(S)$ (resp. $\cl T(S)$) the space of all bounded linear
(resp. trace class) operators on $S$.
There is a canonical isometric isomorphism between the Banach space
dual $\cl T(S)^*$ of $\cl T(S)$ and $\cl B(S)$ via the trace:
$$\lan T,\rho\ra:=\Tr(T\rho), \quad T\in \cl B(S), \rho\in\cl T(S).$$
Thus, $\cl T(S)$ can be identified with the space of normal ({\it i.e.}~weak* continuous) linear
functionals on $\cl B(S)$, where,
if $|\eta\rangle\in S$ and $\langle \xi|$ belongs to the dual $S^*$ of $S$, the rank one operator
$|\eta \rangle \langle \xi|\in \cl T(S)$ corresponds to the vector functional given by
$\omega_{\xi,\eta}(X) = \langle \xi | X | \eta\rangle$, $X\in \cl B(S)$.

We denote by $\cl S(S)$ the set of all \emph{states} on $S$; thus, an
element $\rho\in \cl T(S)$ belongs to $\cl S(S)$ precisely when $\rho$
is positive (that is, $\lan X,\rho\ra=\Tr(X\rho) \geq 0$ whenever
$X\geq 0$) and $\lan I,\rho\ra=\Tr(\rho) = 1$.

If $S$ and $S'$ denote the respective input and output systems of a dynamical quantum process, then,
in the Schr\"{o}dinger picture, states in $\cl T(S)$ evolve under a completely positive trace preserving (CPTP) map to states
in $\cl T(S')$. In the Heisenberg picture, which will be adopted in this paper,
observables in $\cl B(S')$ evolve under a normal
({\it i.e.}~weak*-weak* continuous) unital completely positive (NUCP)
\mpar{v6}{Def. of quantum channel not given at this point -- done below more generally.}
map $\cl E$ to observables in $\cl B(S)$.
As a normal map, $\cl E$ has a unique pre-adjoint $\cl E_*:\cl T(S)\rightarrow\cl T(S')$ which is a
CPTP map describing the corresponding evolution of states.

Suppose that in the above scenario, one wished, or had the ability, to
measure only a certain subset $\cl O$ of observables on the output
space $S'$.  The results of the measurements will then be governed by
the spectral projections of the corresponding elements in $\cl O$,
which, by general spectral theory, lie in the von Neumann algebra $M$
generated by $\cl O$.  Thus, the relevant dynamics is encoded in the
restriction of $\cl E$ to $M$, that is, in a NUCP map $\cl
E:M\rightarrow\cl B(S)$.  As such mappings are natural objects of
study in operator algebra theory, and include the class of classical
information channels, we will adopt this more general framework in
this paper. The remainder of this section will be devoted to a
brief overview of the relevant concepts; for details, we refer the
reader to~\cite{Pa,t1}.\mpar{v10}{ref corrected}

A \emph{von Neumann algebra} on a Hilbert space $S$ is a
$*$-subalgebra $M$ of $\cl B(S)$ with unit~$1_M=I_S\in M$
\mpar{v7}{Slightly ugly rephrasing, but we are using both $1_M$ and
  $I_S$ which could confuse some readers. I have tried to make our
  usage more consistent below.}%
 which is closed in the strong operator topology.
For a subset $L\subseteq\cl B(S)$, its \emph{commutant} is the
subspace
$$L':=\{X\in\cl B(S)\mid XT=TX, \  \mbox{ for all } T\in L\}.$$
Von Neumann's bicommutant theorem states that a unital $*$-subalgebra
$M$ of $\cl B(S)$ is a von Neumann algebra if and only if $M'':=(M')'$
coincides with~$M$.  As $(L')''=L'$ for any subset
$L\subseteq\cl B(S)$, the commutant $M'$ of a von Neumann algebra
$M$ is again a von Neumann algebra on $S$.

Another distinguishing feature of a von Neumann algebra $M$ is that it is (isometrically isomorphic to) the dual of a unique Banach space $M_*$,
called the \emph{predual} of $M$, which consists of all weak* continuous linear functionals on $M$. For example, $M=\cl B(S)$ is a von Neumann algebra with $M_*=\cl T(S)$.
We will denote by $\cl S(M)$ the set of normal states on $M$, that is,
the positive elements $\rho$ of $M_*$ satisfying $\lan I_S,\rho\ra=1$.
If $M$ and $N$ are von Neumann algebras,
a bounded linear map $\cl E : M\rightarrow N$ is said to be \emph{normal} if it is weak*-weak* continuous.
In this case, $\cl E$ has a unique pre-adjoint $\cl E_*:N_*\rightarrow M_*$ satisfying
$$\lan X,\cl E_*(\rho)\ra=\lan\cl E(X),\rho\ra, \quad X\in M,\  \rho\in N_*.$$
Moreover, $\cl E$ is a NUCP map if and only if $\cl E_*$ is completely positive and
$\cl E_*(\cl S(N))\subseteq\cl S(M)$.

Given two Hilbert spaces $S$ and $S'$, we denote by $S\otimes S'$
their Hilbertian tensor product.  For operators $X\in \cl B(S)$ and
$Y\in \cl B(S')$, as usual we denote by $X\otimes Y$ the (unique)
operator in $\cl B(S\otimes S')$ with $(X\otimes Y)(\xi\otimes\eta) =
X\xi\otimes Y\eta$, $\xi\in S$, $\eta\in S'$.  If $M\subseteq\cl B(S)$
and $N\subseteq\cl B(S')$ are von Neumann algebras, the weak* closed
linear span $M\bar{\otimes} N$ of $\{X\otimes Y\mid X\in M,\ Y\in N\}$
is a von Neumann subalgebra of $\cl B(S\otimes S')$.  In particular,
$\cl B(S)\bar\otimes\cl B(S') = \cl B(S\otimes S')$.  If $\rho\in M_*$
and $\omega\in N_*$, then there exists a (unique) element
$\rho\otimes\omega\in (M\oten N)_*$ such that
$$\lan X\ten Y, \rho\otimes\omega\ra = \lan X,\rho\ra\lan Y,\omega\ra, \quad X\in M,\ Y\in N.$$
Thus, we have a natural embedding of the algebraic tensor product $M_*\odot N_*$
into $(M\oten N)_*$; its image is norm dense in $(M\oten N)_*$.

Given a Hilbert space $S$ and a von Neumann algebra $M$, a \emph{quantum channel} is a NUCP map
$\cl E : M\to \cl B(S)$. (This is the dual viewpoint of how channels are typically presented
in quantum information theory as CP trace-preserving maps, but is more natural in the operator algebra
setting.)
Note that a quantum channel $\cl E$ is automatically completely bounded (see {\it e.g.}~\cite{Pa}).
We denote by $\|\Phi\|_{\cb}$ the c.b.~norm of a completely bounded map $\Phi$.
\mpar{v9}{Added note on this being the dual of how usually presented in the QI literature.}
\mpar{v6}{Notation for cb norm included.}
In the case $M = \bb{C}$, an important example is the depolarizing channel
\mpar{v4}{right terminology?}%
\mpar{v5}{Standard terminology for its predual}%
$\cl D : \bb{C}\to \cl B(S)$ given by $\cl D(\lambda) = \lambda I$.
It is straightforward to check that $\cl D_* : \cl T(S)\to \bb{C}$ coincides with the trace.
If $\cl F:N\rightarrow\cl B(S')$ is another quantum channel on the von Neumann algebra $N$, then there exists a (unique)
quantum channel $\cl E\otimes \cl F : M\oten N\rightarrow\cl B(S\ten S')$ such that
$$(\cl E\otimes \cl F)(X\ten Y)=\cl E(X)\ten\cl F(Y),\quad X\in M,\ Y\in N.$$
Channels can similarly be tensored in the Schr\"odinger picture, and
$(\cl E\otimes \cl F)_* = \cl E_*\otimes \cl F_*$.
\mpar{v3}{Sentence added.}

Stinespring's theorem for normal maps
\mpar{v4}{Is there a handy ``Stinespring for normal maps'' reference?}%
\mpar{v5}{Not sure}
asserts that if $\cl E:M\rightarrow\cl B(S)$ is a quantum
channel, then there exist a Hilbert space $H$, a normal unital
\mpar{v4}{added ``unital''}%
$*$-homomorphism $\pi:M\rightarrow\cl B(H)$ and an isometry $V :
S\rightarrow H$ such that
\begin{equation}\label{eq_srep}
\cl E(X) = V^*\pi(X)V, \quad X\in M.
\end{equation}
We refer to the triple $(\pi,V,H)$ as a \emph{Stinespring triple} for
$\cl E$, and to \mpar{v3}{term Stinespring triple introduced} identity
(\ref{eq_srep}) as a \emph{Stinespring representation} of $\cl E$.
Such a Stinespring representation is unique up to
a conjugation by a partial isometry
in the following sense: if $(\pi_1,V_1,H_1)$ and $(\pi_2,V_2,H_2)$ are
Stinespring triples for $\cl E$, then there is a partial isometry
$U:H_1\rightarrow H_2$ such that
\begin{equation}\label{Stine}
UV_1=V_2, \quad U^*V_2=V_1 \ \ \text{and} \ \ U\pi_1(X)=\pi_2(X)U\end{equation}
for all $X\in M$. If~$(\pi_1,V_1,H_1)$ yields a \emph{minimal}
Stinespring representation, meaning that the linear span of
$\pi_1(M)V_1S$ is a dense subspace of $H_1$, then we will call
$(\pi_1,V_1,H_1)$ a \emph{minimal Stinespring triple} for $\cl
E$.\mpar{v4}{minimal triple defined} In this case, the map $U$ above
is necessarily an isometry,
\mpar{v4}{isometry comment added}
and any two minimal Stinespring representations for $\cl E$ are
unitarily equivalent.

\section{Private Quantum Codes via Commutant Structures}\label{s_pacs}
\mpar{v3}{Improve title of section}
\mpar{v5}{Changed: thoughts?}
\mpar{v6}{Looks good to me.}

We now introduce our generalized notion of privacy for quantum channels.
Given Hilbert spaces $S$ and $S'$ and a bounded operator $T : S'\to S$,
we write $\cl C_T : \cl B(S')\to \cl B(S)$, $\cl C_T(X) = TXT^*$ for
conjugation by $T$. \mpar{v10}{slight rephrase}%
Clearly, if $T$ is a partial isometry then $\cl C_T$ is a
quantum channel from $\cl B(S')$ into $\cl B(TT^*S)$.
\mpar{v3}{New notation introduced for conjugation and used throughout the paper.}%
%
%
For a Hilbert space $S$, we let $\cl P(S)$ denote the set of projections in $\cl B(S)$.


\begin{definition}
Let $S$ be a Hilbert space, let $M$ be a von Neumann algebra, and
let $\cl E : M\rightarrow\cl B(S)$ be a quantum channel.
If $P\in\cl P(S)$, a von Neumann subalgebra
\mpar{v3}{terminology slightly changed}%
$N\subseteq\cl B(PS)$ is called \emph{private for $\cl E$ with respect to $P$}
if\mpar{v10}{display important definition}
\[\cl C_P\circ\cl E(M)\subseteq N'.\] Given $\varepsilon>0$,
we say that $N$ is $\varepsilon$-\emph{private for $\cl E$ with respect to $P$}
if there exists a quantum channel $\cl F:M\rightarrow\cl B(S)$ such that
$$\Norm{\cl E-\cl F}_{\cb}<\varepsilon$$
and $N$ is private for $\cl F$ with respect to $P$. If $P = I$,
we simply say that $N\subseteq\cl B(S)$ is
\emph{private} (resp.\ $\varepsilon$-\emph{private}) for $\cl E$.
\end{definition}

\begin{remark*}
  The definition of a private subalgebra is motivated by the notion of
  an operator private subsystem \cite{bhs,brs,jklp,jklp1,kks}.
	\mpar{v9}{Added two references. ''Operator private'' terminology was actually introduced in \cite{jklp,jklp1} but was
	the focus of the examples in the other three papers, and is a special case of the general private notion that goes back to 2000. I
	can give more historical context when we talk.}%
	Recall that, if
  $S, A, B$ and $S'$ are finite-dimensional Hilbert spaces with $S = (A\ten
  B)\oplus(A\ten B)^{\perp}$ and $\cl E : \cl B(S')\rightarrow\cl
  B(S)$ is a UCP map with pre-adjoint $\cl E_*:\cl T(S)\rightarrow\cl
  T(S')$, then $B$ is called an operator private subsystem for $\cl E$
  if $\cl E_*\circ (\cl C_P)_* = \cl F_*\ten \Tr$ for some quantum
  channel $\cl F:\cl B(S')\rightarrow \cl B(A)$, where $P$ is the
  projection from $S$ onto $A\ten B$ \cite{kks}.  Assuming $P\rho P =
  \sum_{i=1}^n \rho_i^A\ten\rho_i^B$, where $\rho_i^A$
  (resp.\ $\rho_i^B$) are elements of $\cl T(A)$ (resp.\ $\cl T(B)$), we
  have
  \begin{align*}\lan\cl C_P\circ\cl E(T),\rho\ra&=\lan T, \cl E_*\circ
    (\cl C_P)_*(\rho)\ra
    = \lan T,\cl E_*(P\rho P)\ra\\
    & = \sum_{i=1}^n\lan T,\cl E_* (\rho_i^A\ten\rho_i^B)\ra
    = \sum_{i=1}^n\lan T, (\cl F_*\ten\Tr)(\rho_i^A\ten\rho_i^B)\ra\\
    & =\sum_{i=1}^n\lan T, \cl F_*(\rho_i^A)\ra\lan I_B,\rho_i^B\ra
    = \sum_{i=1}^n\lan \cl F(T)\ten I_B,\rho_i^A\ten\rho_i^B\ra\\
    & = \lan \cl F(T)\ten I_B,P\rho P\ra = \lan P(\cl F(T)\ten
    I_B)P,\rho\ra.\end{align*}
     \mpar{v3}{an
    extra sentence added}
  \mpar{v4}{Do we need to invoke continuity in the finite dimensional case?}
  \mpar{v5}{No. Removed}
Thus,
$$\cl C_P\circ\cl E(\cl B(S'))\subseteq \cl B(A)\ten I_B = (I_A\ten\cl B(B))' = N',$$
where $N:=I_A\ten\cl B(B) = \{I_A\otimes Y : Y\in \cl B(B)\}$, a
von Neumann subalgebra of $\cl B(PS)$.

Conversely, if $\cl C_P\circ \cl E(\cl B(S'))\subseteq N' = \cl
B(A)\ten I_B$, then for every $T\in\cl B(S')$ there exists $X_T\in\cl
B(A)$ such that $\cl C_P\circ\cl E(T) = X_T\ten I_B$. Set $\cl F(T) =
X_T$.  It is easy to check that this defines a UCP map $\cl F\colon\cl B(S')\to \cl B(A)$
and that its pre-adjoint
$\cl F_*:\cl T(A)\rightarrow T(S')$ satisfies $\cl E_*\circ (\cl C_P)_* = \cl
F_*\ten \Tr$.  Thus, $B$ is an operator private subsystem if and only
if $N\cong\cl B(B)$ is a private subalgebra for $\cl E$ with respect
to $P$.

The choice of the term \lq\lq private'' is justified by
the fact that any information stored in the operator private subsystem
$B$ completely decoheres under the action of $\cl E_*$ \cite{amtdw,brs}.
From the Heisenberg perspective, observables on the output system evolve under $\cl E$ to observables having uniform statistics with respect to the subsystem $B$
in the sense that the expected value of a measurement of $\cl E(T)$ in the state $\rho\in\cl T(A\ten B)$ solely depends on the marginal state $\Tr_B(\rho)\in\cl T(A)$.

In the more general setting of private subalgebras, not all information about observables in the subalgebra $N\subseteq\cl B(PS)$ is lost under the action of
$\cl E:M\rightarrow\cl B(S)$, just the \emph{quantum} information. Indeed, the only obtainable information about $N$ after an application of the channel is the classical information
contained in its center $\cl Z(N)=N\cap N'$. Thus, we recover the usual sense of privacy when $N$ is a \emph{factor}, meaning $\cl Z(N)=\mathbb{C}I$.
If $N$ is a factor of type~I, then $N\cong I_A\ten \cl B(B)$ for some Hilbert spaces $A$ and $B$ \cite{t1}.
This induces a decomposition $S=(A\ten B)\oplus(A\ten B)^{\perp}$ and it follows that $B$ is an operator private subsystem for $\cl E$.
Hence, operator private subsystems are precisely the private type~I factors.
\end{remark*}


\noindent {\bf Examples.}
An immediate class of examples of private subalgebras arises from normal conditional expectations.
If $S$ is a Hilbert space and $\cl E:\cl B(S)\rightarrow N'$ is a normal conditional expectation, that is, a weak*-weak* continuous projection of norm one,
where $N\subseteq\cl B(S)$ is a von Neumann subalgebra,
then trivially, $N$ is private for the quantum channel $\cl E$. Some concrete examples are the following.

\smallskip

\noindent {\bf (i)}
{\it Deletion channels:} $\cl E(T)=\lan T,\rho\ra I$, for some $\rho\in\cl S(S)$; in this case $N=\cl B(S)$.

\smallskip

\noindent {\bf (ii)}
{\it Uniform phase-flips on $n$-qubits:}
$$\cl E(T)=\frac{1}{2^n}\sum_{(s_1,\cdots,s_n)\in\bb{Z}_2^n}Z_{(s_1,\cdots,s_n)}TZ_{(s_1,\cdots,s_n)}^*,$$
where $Z_{(s_1,\cdots,s_n)}=\ten_{i=1}^nZ_{s_i}$ where $Z_0 = I$
and $Z_1 = \smallmat{1&0 \\ 0&-1}$; in this case, $N=N'=\ten_{i=1}^n\Delta_2$,
where $\Delta_2$ is the diagonal subalgebra of $M_2(\bb{C})$.

\smallskip

\noindent {\bf (iii)}
{\it Uniform bit-flips on $n$-qubits:}
$$\cl E(T)=\frac{1}{2^n}\sum_{(s_1,\cdots,s_n)\in\bb{Z}_2^n}X_{(s_1,\cdots,s_n)}T X_{(s_1,\cdots,s_n)}^*,$$
where $X_{(s_1,\cdots,s_n)}=\ten_{i=1}^nX_{s_i}$ with $X_0 = I$
and $X_1 = \smallmat{0&1 \\ 1&0}$; in this case, $N=N'=\ten_{i=1}^n\cl C_2$,
where $\cl C_2$ is the subalgebra of circulant matrices in $M_2(\bb{C})$.

\smallskip

\mpar{v5}{Should we shorten this list of examples? Perhaps only the first and last?}%
\mpar{v7}{Seems fine as is to me.}%
\mpar{v8}{Another vote for keeping the examples as they are.}

The latter two examples fall under a general class of conditional expectations arising from compact group representations:
if $\pi:G\rightarrow\cl B(H_\pi)$ is a unitary representation of a compact group, then $\cl E:\cl B(H_\pi)\rightarrow\cl B(H_\pi)$ defined by
$$\cl E(T)=\int_G\pi(s)T\pi(s)^*\,dh(s)$$
where $h$ is a normalized Haar measure on $G$, is a conditional expectation onto $\pi(G)'$, so that $N=\pi(G)''$ in this case. A similar class of examples
was considered in \cite{brs}.

\mpar{v4}{conditionally private definition removed, reinstate if desired!}
\mpar{v5}{no desire at this point}

\section{Complementarity with Correctable Subalgebras}\label{s_ccs}

In finite dimensions, a perfect duality exists between operator private and correctable subsystems: a subsystem is correctable for a channel $\cl E$ if and only if it is private
for any complementary channel $\cl E^c$ \cite{kks}. Using the continuity of the Stinespring representation \cite{ksw},
an approximate version of the complementarity theorem was also established \cite{kks}.
In this section, we generalize the notion of complementarity to quantum channels of the form $\cl E:M\rightarrow\cl B(S)$, and in this new framework,
extend the complementarity theorem and its approximate version.

\mpar{v9}{I only made some minor smoothings and corrections to this section.}

\begin{definition}
Let $S$ be a Hilbert space,
let $M$ be a von Neumann algebra, and let $\cl E:M\rightarrow\cl B(S)$ be a quantum channel.
Given a Stinespring triple
$(\pi,V,H)$ for $\cl E$, we define the \emph{complementary channel} of~$\cl E$
with respect to $(\pi,V,H)$ to be
\mpar{v3}{dependence on Stinespring triple noted, and removed later.}
the NUCP map $\cl E^c_{\pi,V,H} : \pi(M)'\rightarrow\cl B(S)$ given by
$$\cl E^c_{\pi,V,H}(X) = V^*XV, \quad X\in\pi(M)'.$$
We also say that $\cl E^c_{\pi,V,H}$ is \emph{a complementary channel of~$\cl E$}.
\end{definition}

\begin{remark}\label{r_newl}\mpar{v3}{Remark added}%
Suppose that $(\pi_1,V_1,H_1)$ and $(\pi_2,V_2,H_2)$ are Stinespring triples
for $\cl E$, and let
$\cl F_1 = \cl E^c_{\pi_1,V_1,H_1}$ and $\cl F_2 = \cl E^c_{\pi_2,V_2,H_2}$.
By the uniqueness of the Stinespring representation,
there exists a partial isometry $U : H_1\to H_2$ satisfying identities (\ref{Stine}).
It follows that, if $Y\in \pi_1(M)'$ and $X\in M$ then
\mpar{v6}{Calculation included.}
\begin{align*}
\cl C_U(Y)\pi_2(X)
& =
UYU^*\pi_2(X) = UY\pi_1(X)U^* = U\pi_1(X)YU^*\\
& =
\pi_1(X)UYU^* = \pi_2(X)\cl C_U(Y);
\end{align*}
thus,
$\cl C_U(\pi_1(M)')\subseteq\pi_2(M)'$ and,
similarly, $\cl C_{U^*}(\pi_2(M)')\subseteq\pi_1(M)'$.
Hence the maps $\cl F_2\circ\cl C_U$ and $\cl F_1\circ\cl C_{U^*}$
are well-defined; by (\ref{Stine}),
$\cl F_1 = \cl F_2\circ\cl C_U$ and $\cl F_2 = \cl F_1\circ\cl C_{U^*}$.
\mpar{v7}{``$\cl E^c$ means any complementary channel''
  deleted (we now say exactly which complements we mean).}%
\end{remark}

\mpar{v7}{There's nothing special about (\ref{ampind2}) here: the same
  argument identifies $\cl E^{cc}$ with $\cl E$ if we use any
  Stinespring rep.~$(\pi,V,H)$ with $\pi$ faithful. Is this worth
  pointing out?}
\mpar{v8}{Done. Text reorganised slightly.}
Let $\cl E : M\rightarrow\cl B(S)$ be a quantum channel and
suppose that $(\pi,V,H)$ is a Stinespring triple for $\cl E$ with $\pi$ faithful.
Let $\cl E^c = \cl E^c_{\pi,V,H}$, and note that
$(\id_{\pi(M)'},V,H)$ is a Stinespring triple for $\cl E^c$
(here $\id_{\pi(M)'} : \pi(M)'\to \pi(M)'$ is the identity map).
Letting $\cl E^{cc} : \pi(M) \to \cl B(S)$ be the complement of $\cl E^c$
with respect to this Stinespring triple, we have that
$$\cl E^c(\pi(X)) = \cl E(X), \ \ \ X\in M.$$
Identifying $M$ with $\pi(M)$, we see that $\cl E^{cc} = \cl E$; thus, the
generalized notion of complementarity is involutive, as expected.

A specific example of a Stinespring triple for $\cl E$
whose corresponding normal representation is faithful can be obtained as follows.
Let $M_1\subseteq\cl B(H_1)$ and $M_2\subseteq\cl B(H_2)$ be
von Neumann algebras. The amplification-induction theorem \cite[Theorem IV.5.5]{t1}
states that for every normal $*$-homomorphism
$\pi$ from $M_1$ onto $M_2$, there exists a Hilbert space $H_3$, a projection $P\in M_1'\oten \cl B(H_3)$
and a unitary $U : H_2\to P(H_1\ten H_3)$ such that
$$\pi(X)=U^*P(X\ten I_{H_3})PU, \ \ X\in M_1.$$
Viewing $PU$ as an isometry $W : H_2\to H_1\ten H_3$, we have
\begin{equation}\label{ampind}
\pi(X)=W^*(X\ten I_{H_3})W, \ \ X\in M_1.
\end{equation}
Now suppose that $M\subseteq\cl B(S')$ is a von Neumann algebra,
$\cl E : M\to\cl B(S)$ is a quantum channel and
$(\pi,V,H)$ is a Stinespring triple for $\cl E$ (with $\pi$ not necessarily faithful).
Since the image $\pi(M)$ is a von Neumann algebra on $H$ \cite{t1},
the amplification-induction theorem allows us to write
\begin{equation}\label{ampind2}
\cl E(X)=\widetilde{V}^*(X\ten I_{H_3})\widetilde{V},\ \ X\in M,
\end{equation}
where $\widetilde{V}=WV$ is the composition of the Stinespring isometry $V:S\rightarrow H$
and the isometry $W : H\rightarrow S'\ten H_3$ from the representation of $\pi$
as in equation (\ref{ampind}).

\mpar{v10}{new par}%
Note that if $M = \cl B(S')$, then $M' = \bb{C}I_{S'}$ and $P = I_{S'}\ten P'\in I_{S'}\oten\cl B(H_3)$
for some $P'\in \cl P(H_3)$,
so we may view $W$ as a unitary from $H$ onto $S'\ten P'H_3$.
Equation (\ref{ampind2}) then becomes the usual Stinespring representation
of a quantum channel $\cl E:\cl B(S')\rightarrow\cl B(S)$,
and its corresponding complement $\cl E^c:I_{S'}\oten\cl B(P'H_3)\to\cl B(S)$
is the usual complementary channel as studied in the literature.

\begin{lemma}\label{l_compcom}
  Let $S$ and $S'$ be Hilbert spaces, $M\subseteq\cl B(S')$ be a von
  Neumann algebra, $\cl E : M\rightarrow\cl B(S)$ be a quantum channel
  and $W : S\to S$ be a partial isometry. 
  If $\cl E^c$ is a complementary channel of $\cl E$, then $\cl
  C_W\circ \cl E^c$ is a complementary channel of $\cl C_W\circ \cl
  E$.
\end{lemma}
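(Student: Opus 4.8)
The plan is to build an explicit Stinespring triple for $\cl C_W\circ\cl E$ out of one for $\cl E$, and then read off its complementary channel by a direct computation. Fix a Stinespring triple $(\pi,V,H)$ for $\cl E$ with $\cl E(X)=V^*\pi(X)V$ and $\cl E^c=\cl E^c_{\pi,V,H}$, so that $\cl E^c(Y)=V^*YV$ for $Y\in\pi(M)'$. Expanding the composition gives, for $X\in M$,
$$(\cl C_W\circ\cl E)(X)=WV^*\pi(X)VW^*=(VW^*)^*\pi(X)(VW^*),$$
so the natural candidate for a Stinespring isometry is $VW^*$. The difficulty—and really the only subtle point—is that $VW^*$ is merely a \emph{partial} isometry when $W$ is one: indeed $(VW^*)^*(VW^*)=WV^*VW^*=WW^*$, which is the range projection $R:=WW^*$ of $W$ rather than the identity.

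First I would resolve this by choosing the output space correctly. Since $W$ is a partial isometry, $\cl C_W\circ\cl E$ is a quantum channel from $M$ into $\cl B(RS)$, in line with the earlier observation that $\cl C_W$ maps into $\cl B(WW^*S)$. Restricting $VW^*$ to its initial space $RS$ then yields a genuine isometry $\widehat V:RS\to H$, $\widehat V\xi=VW^*\xi$; the identity $\widehat V^*\widehat V=I_{RS}$ follows from $\langle VW^*\xi,VW^*\eta\rangle=\langle\xi,WW^*\eta\rangle=\langle\xi,\eta\rangle$ for $\xi,\eta\in RS$. A short computation shows $\widehat V^*h=WV^*h$ for $h\in H$ (the range of $WV^*$ lying in $RS$), whence $(\cl C_W\circ\cl E)(X)=\widehat V^*\pi(X)\widehat V$ for every $X\in M$. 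Thus $(\pi,\widehat V,H)$ is a Stinespring triple for $\cl C_W\circ\cl E$, using the \emph{same} representation $\pi$, so that the complementary channel attached to it again has domain $\pi(M)'$.

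Finally I would evaluate that complementary channel. By definition it is the NUCP map $\pi(M)'\to\cl B(RS)$ given by $Y\mapsto\widehat V^*Y\widehat V$, and for $Y\in\pi(M)'$ this equals $WV^*YVW^*=W\cl E^c(Y)W^*=(\cl C_W\circ\cl E^c)(Y)$. Hence $\cl C_W\circ\cl E^c$ is precisely the complementary channel of $\cl C_W\circ\cl E$ with respect to $(\pi,\widehat V,H)$, which is the assertion. I expect the main obstacle to be exactly the one flagged above: tracking the partial isometry and compressing the output to $\cl B(RS)$ so that the candidate $VW^*$ becomes an honest Stinespring isometry. Once the output space is pinned down, both the Stinespring identity and the identification of the complement are immediate algebraic verifications.
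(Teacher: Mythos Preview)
Your proof is correct and follows essentially the same route as the paper: start from a Stinespring triple $(\pi,V,H)$ for $\cl E$, observe that $\cl C_W\circ\cl E(X)=(VW^*)^*\pi(X)(VW^*)$, and read off the complement. The paper simply declares that $(\pi,VW^*,H)$ is a Stinespring triple for $\cl C_W\circ\cl E$ and says the claim is immediate; you make explicit the point the paper leaves implicit, namely that $VW^*$ becomes a genuine isometry once one views $\cl C_W\circ\cl E$ as a channel into $\cl B(WW^*S)$ and restricts $VW^*$ to $WW^*S$.
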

\begin{proof}
Suppose that $\cl E^c$ is associated with the Stinespring triple $(\pi,V,H)$ of $\cl E$.
Then
$$\cl E(X) = V^*\pi(X)V, \quad X\in M$$
and
$$\cl E^c(Y) = V^*YV, \quad Y\in \pi(M)'.$$
Thus,
$$\cl C_W\circ \cl E(X) = WV^*\pi(X)VW^*, \quad X\in M,$$
and hence $(\pi, VW^*,H)$ is a Stinespring triple for $\cl C_W\circ \cl E$.
The claim is now immediate.
\end{proof}

Before proceeding to the complementarity theorem,
we recall the operator algebra formalism of quantum error correction \cite{bkk1,bkk1a,bkk2}.

\begin{definition}\label{d_corr}
Let $S$ be a Hilbert space, $M$ be a von Neumann algebra, and
$\cl E:M\rightarrow\cl B(S)$ be a quantum channel.
If $P\in\cl P(S)$, a von Neumann subalgebra $N\subseteq\cl B(PS)$ is said to be
\emph{correctable for $\cl E$ with respect to $P$} if there exists a quantum channel $\cl R:N\rightarrow M$
such that\mpar{v10}{display}%
\[\cl C_P\circ \cl E\circ\cl R=\id_N.\] Given $\varepsilon>0$,
\mpar{v3}{As before, terminology slightly changed}
we say that $N$ is $\varepsilon$-\emph{correctable for $\cl E$ with respect to $P$}
if there exists a quantum channel $\cl R:N\rightarrow M$ such that
$$\Norm{\cl C_P\circ \cl E\circ\cl R-\id_N}_{\cb} < \varepsilon.$$
If $P = I$, we simply say that $N\subseteq\cl B(S)$ is
\emph{correctable} (resp.\ $\varepsilon$-\emph{correctable}) for $\cl E$.
\end{definition}

The above definition unifies the notions of correctable
\mpar{v7}{Should we briefly explain ``noiseless''?}
\mpar{v9}{Done}%
and noiseless (meaning correctable, but with no active correction required) subspaces  and subsystems under one umbrella, allowing
for a general treatment of quantum error correction using the language
of operator algebras. As mentioned in \cite{bkk2}, correctable
subsystems correspond to correctable von Neumann algebras of type~I,
analogous to the situation above for operator private subsystems.

Note that the channel $\cl R$ in Definition \ref{d_corr} (called the
\emph{recovery channel}) has a slightly more general form than the one
usually studied in the literature, (namely, a NUCP map $\cl R:\cl
B(S')\rightarrow\cl B(S)$ satisfying $\cl C_P\circ\cl E\circ\cl R=\cl
C_P|_{N}$).  The reason is to keep in line with our general picture of
quantum channels as NUCP maps whose domain can be a general von Neumann algebra.
\mpar{v3}{Unclear, improve!}
\mpar{v5}{Removed final sentence. Perhaps we can remove the entire
paragraph as it should be clear to the reader.}



\begin{lemma}\label{lemma:corr-complements}
  Let~$M$ be a von Neumann algebra \mpar{v4}{new lemma} and let $\cl
  E:M\to B(S)$ be a quantum channel. If $\epsilon>0$ and $N\subseteq
  \cl B(S)$ is a von Neumann algebra which is $\epsilon$-correctable
  (respectively, correctable) for some particular complement of~$\cl
  E$, then $N$ is $\epsilon$-correctable (respectively, correctable)
  for every complement of~$\cl E$.
\end{lemma}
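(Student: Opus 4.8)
The plan is to transport a recovery channel for one complement to any other complement by means of the intertwining partial isometry supplied by Remark \ref{r_newl}. Fix two complements $\cl E^c_1 = \cl E^c_{\pi_1,V_1,H_1}$ and $\cl E^c_2 = \cl E^c_{\pi_2,V_2,H_2}$ of $\cl E$, and let $U : H_1\to H_2$ be the partial isometry from that remark, so that $UV_1 = V_2$, $U^*V_2 = V_1$, $\cl C_U(\pi_1(M)')\subseteq\pi_2(M)'$ and $\cl E^c_1 = \cl E^c_2\circ\cl C_U$. Suppose $N\subseteq\cl B(S)$ is correctable for $\cl E^c_1$, witnessed by a quantum channel $\cl R_1 : N\to\pi_1(M)'$ with $\cl E^c_1\circ\cl R_1 = \id_N$. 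The obvious candidate recovery channel for $\cl E^c_2$ is $\cl C_U\circ\cl R_1$, which maps $N$ into $\pi_2(M)'$ and formally satisfies $\cl E^c_2\circ(\cl C_U\circ\cl R_1) = (\cl E^c_2\circ\cl C_U)\circ\cl R_1 = \cl E^c_1\circ\cl R_1 = \id_N$.

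The one thing that fails — and this is the only real obstacle — is unitality: since $U$ is merely a partial isometry, $\cl C_U\circ\cl R_1$ sends $I_N$ to the projection $Q := UU^* = \cl C_U(I_{H_1})$, which lies in $\pi_2(M)'$ but is in general strictly smaller than $I_{H_2}$. I would repair this by setting
$$\cl R_2(Y) = \cl C_U(\cl R_1(Y)) + \phi(Y)\,(I_{H_2}-Q),\quad Y\in N,$$
where $\phi$ is any fixed normal state on $N$. Since $I_{H_2}-Q\in\pi_2(M)'$, this maps $N$ into $\pi_2(M)'$; it is normal and completely positive as a sum of two such maps, and it is now unital because $\cl R_2(I_N) = Q + \phi(I_N)(I_{H_2}-Q) = I_{H_2}$. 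Thus $\cl R_2$ is a genuine quantum channel.

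It remains to verify that the correction term does not spoil correctability, which is where the Stinespring identities pay off. The computation
$$V_2^* Q V_2 = (U^*V_2)^*(U^*V_2) = V_1^* V_1 = I_S$$
shows $\cl E^c_2(I_{H_2}-Q) = V_2^*(I_{H_2}-Q)V_2 = 0$, so the added term is annihilated by $\cl E^c_2$ and hence $\cl E^c_2\circ\cl R_2 = \cl E^c_2\circ\cl C_U\circ\cl R_1 = \cl E^c_1\circ\cl R_1 = \id_N$, proving correctability for $\cl E^c_2$. For the $\epsilon$-version I would run exactly the same construction starting from $\Norm{\cl E^c_1\circ\cl R_1 - \id_N}_{\cb}<\epsilon$; since $\cl E^c_2\circ\cl R_2 = \cl E^c_1\circ\cl R_1$ as maps, the cb-distance to $\id_N$ is unchanged and the bound transfers verbatim. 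As $\cl E^c_1$ and $\cl E^c_2$ were an arbitrary ordered pair of complements, this yields the statement for every complement of $\cl E$.
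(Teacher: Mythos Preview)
Your proof is correct and follows essentially the same approach as the paper: both transport the recovery channel through the intertwining partial isometry $U$ from Remark~\ref{r_newl}, add a normal-state-weighted multiple of $I-UU^*$ to restore unitality, and then check that this correction term is annihilated upon composing with the target complement. The only cosmetic difference is that you verify $\cl E^c_2(I_{H_2}-Q)=0$ directly via $V_2^*QV_2 = V_1^*V_1 = I_S$, whereas the paper reaches the same conclusion $\cl E^c\circ\cl R = \cl E_0^c\circ\cl R_0$ by composing the identities $\cl E^c = \cl E_0^c\circ\cl C_{U^*}$ and $\cl E_0^c = \cl E^c\circ\cl C_U$.
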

\begin{proof}
  Let $(\pi_0,V_0,H_0)$ and $(\pi,V,H)$ be Stinespring triples
  for~$\cl E$, and denote the corresponding complements by $\cl E_0^c$
  and $\cl E^c$.  Suppose that $N$ is $\epsilon$-correctable for
  $\cl E_0^c$; we will show that the same is true of~$\cl
  E^c$. There is a quantum channel $\cl R_0:N\to \pi_0(M)'$ with
  $\cbnorm{\cl E_0\circ \cl R_0-\id_N}<\epsilon$. By
  Remark~\ref{r_newl}, there is a partial isometry $U:H_0\to H$ so
  that
\begin{equation}\label{eq_conco}
  U\pi_0(M)'U^*\subseteq \pi(M)',\quad \cl E_0^c=\cl E^c\circ \cl
  C_{U} \ \ \mbox{ and } \ \  \cl E^c=\cl E_0^c\circ \cl C_{U^*}.
\end{equation}
  Fix a normal state
  $\omega\in N_*$ and define a quantum channel $\cl R:N\to \pi(M)'$
  by
  \[ \cl R(T)=U\cl R_0(T)U^*+\langle T,\omega\rangle(1-UU^*),\quad
  T\in N.\] (The second term is required to ensure that $\cl R$ is
  unital.) Since $U^*U$ is a projection, we have $\cl C_{U^*}\circ \cl
  R = \cl C_{U^*}\circ \cl C_{U}\circ \cl R_0$ and so, by (\ref{eq_conco}),
  \[ \cl E^c \circ \cl R =
  \cl E_0^c \circ \cl C_{U^*}\circ \cl R =
 \cl E_0^c \circ \cl C_{U^*}\circ \cl C_{U}\circ \cl R_0 =
 \cl E_0^c \circ \cl R_0.
 \]
 Hence $\cbnorm{\cl E^c\circ \cl R-\id_N}=\cbnorm{\cl E_0^c\circ \cl
   R_0-\id_N}<\epsilon$ and so $N$ is $\epsilon$-correctable for $\cl
 E^c$. The assertion with correctability in place of
 $\epsilon$-correctability is proven by replacing ``less than
 $\epsilon$'' with ``equal to zero'' in the preceding.
\end{proof}

The following elementary lemma will be used to obtain quantum
channels from (not necessarily unital) normal completely positive
maps.

\begin{lemma}\label{lemma:unital}
  Let $S$ be a Hilbert space and
  \mpar{v4}{new lemma}%
  \mpar{v6}{Assumption on the former operator $A$ replaced by the assumption that $\cl F$ be contractive.
  Operator $A$ defined in proof.}%
  \mpar{v6}{Roles of $M$ and $N$ swapped for coherence with the rest of the paper.}%
  let $M$ and $N$ be von Neumann algebras with $N\subseteq \cl B(S)$. If
  ${\cl F} : M\to N$ is a normal completely positive contractive map,
  then there is a quantum channel $\widetilde{\cl F} : M\to
  N$ with $\|\widetilde{\cl F}-\cl E\|_{\cb}\leq 2\|\cl F-\cl E\|_{\cb}$
  for any quantum channel $\cl E : M\to \cl B(S)$.
\end{lemma}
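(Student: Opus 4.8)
The goal is to turn a normal completely positive contractive map $\cl F:M\to N$ into a genuine quantum channel $\widetilde{\cl F}:M\to N$ (that is, a NUCP map) while controlling the cb-distance to any reference channel $\cl E$. The obstruction is purely the lack of unitality: $\cl F(1_M)$ is a positive element of $N$ with $0\le\cl F(1_M)\le 1_N$ (positivity from complete positivity, the upper bound from contractivity), but it need not equal $1_N$. The natural fix is to add a correction term supported on the ``defect'' $1_N-\cl F(1_M)$. The plan is to set $A:=(1_N-\cl F(1_M))^{1/2}\in N$, fix a normal state $\omega\in M_*$, and define
\[
\widetilde{\cl F}(X)=\cl F(X)+\langle X,\omega\rangle\, A^2,\quad X\in M.
\]
Here I would use a normal state $\omega$ (one exists because $M$ is a von Neumann algebra, hence has a normal state) so that $X\mapsto\langle X,\omega\rangle A^2$ is a normal completely positive map into $N$; its range is $\bb C A^2\subseteq N$, so $\widetilde{\cl F}$ does land in $N$.

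First I would verify that $\widetilde{\cl F}$ is a quantum channel. Normality and complete positivity are clear, being preserved under sums of normal CP maps. Unitality follows from the computation $\widetilde{\cl F}(1_M)=\cl F(1_M)+\langle 1_M,\omega\rangle A^2=\cl F(1_M)+(1_N-\cl F(1_M))=1_N$, using $\langle 1_M,\omega\rangle=1$ since $\omega$ is a state.

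Next I would estimate the cb-norm. For any quantum channel $\cl E:M\to\cl B(S)$, the triangle inequality gives
\[
\|\widetilde{\cl F}-\cl E\|_{\cb}\le\|\cl F-\cl E\|_{\cb}+\|X\mapsto\langle X,\omega\rangle A^2\|_{\cb}.
\]
The second map is the composition of the functional $X\mapsto\langle X,\omega\rangle$ with multiplication by $A^2$, and its cb-norm equals $\|A^2\|=\|A\|^2=\|1_N-\cl F(1_M)\|$. The key step is to bound this defect by $\|\cl F-\cl E\|_{\cb}$: since $\cl E$ is unital, $1_N-\cl F(1_M)=\cl E(1_M)-\cl F(1_M)=(\cl E-\cl F)(1_M)$ (interpreting $\cl E(1_M)=I_S$ and noting $N\subseteq\cl B(S)$), whence $\|1_N-\cl F(1_M)\|\le\|\cl E-\cl F\|_{\cb}\|1_M\|=\|\cl E-\cl F\|_{\cb}$. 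Combining the two displays yields the claimed bound $\|\widetilde{\cl F}-\cl E\|_{\cb}\le 2\|\cl F-\cl E\|_{\cb}$.

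The main point to get right is the identification $1_N-\cl F(1_M)=(\cl E-\cl F)(1_M)$: this requires being careful that $1_N$ and $I_S$ agree as the value $\cl E(1_M)$, i.e.\ that the unit of $N$ is the identity operator $I_S$ on $S$ (which holds by the convention on von Neumann algebras stated in the preliminaries, since $N\subseteq\cl B(S)$). Everything else is routine: positivity of $A^2$ and the elementary cb-norm identity $\|X\mapsto\langle X,\omega\rangle A^2\|_{\cb}=\|A^2\|$ (the map factors through $\bb C$, so its cb-norm coincides with its norm, which is $\|A^2\|$). I would also remark that no continuity subtlety arises because all maps involved are normal by construction.
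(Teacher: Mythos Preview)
Your proposal is correct and is essentially the paper's own argument: define $\widetilde{\cl F}(X)=\cl F(X)+\langle X,\omega\rangle\,(1_N-\cl F(1_M))$ for a normal state $\omega$, check it is a NUCP map into $N$, and bound $\|\widetilde{\cl F}-\cl F\|_{\cb}=\|1_N-\cl F(1_M)\|=\|(\cl E-\cl F)(1_M)\|\le\|\cl E-\cl F\|_{\cb}$, then use the triangle inequality. The only cosmetic differences are that the paper writes $A=1_N-\cl F(1_M)$ directly (your square root is unnecessary since you only ever use $A^2$), and the paper obtains $\|\widetilde{\cl F}-\cl F\|_{\cb}=\|A\|$ by invoking that a completely positive map attains its cb-norm at the unit, whereas you factor the rank-one map through $\bb C$; both justifications are standard and yield the same bound.
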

\begin{proof}
  Let $\omega\in M_*$ be a normal state and set $A=1_N-{\cl F}(1_M)$.
  Since $\cl F$ is contractive and positive, ${\cl F}(1_M)$ is a positive contraction and so $A\geq 0$.
  Let $\widetilde{\cl F}$ be the map defined by
  $\widetilde{\cl F}(X)={\cl F}(X)+\langle X,\omega\rangle A$, $X\in M$. Then
  $\widetilde{\cl F}$ is unital by construction, and as it is the sum
  of two normal completely positive maps into~$N$, we see that
  $\widetilde{\cl F}$ is a quantum channel from $M$ into $N$. The map $\widetilde{ \cl
    F}-\cl F$ is completely positive, so it attains its (completely \mpar{v6}{Reference to \cite{Pa} added.}
  bounded) norm at $1_M$ \cite{Pa}; hence, \[ \cbnorm{\widetilde{\cl F}-\cl
    F}=\|\langle 1_M,\omega\rangle A\|= \|A\|=\|(\cl E-\cl
  F)(1_M)\|\leq \cbnorm{\cl F- \cl E}.\]
  Thus $\cbnorm{\widetilde{\cl
      F}-\cl E}\leq \cbnorm{\widetilde{\cl F} -\cl F} + \cbnorm{\cl F-\cl
    E} \leq 2\cbnorm{\cl F-\cl E}$.
\end{proof}

The next theorem is one of the central results of the
paper. \mpar{v3}{commentary added}\mpar{v9}{beefed up referencing note} It generalizes the main results of
both \cite{kks} and \cite{b}, which correspond \mpar{v10}{correpsonds $\to$ correspond}%
to the special case that $S'$ is
finite dimensional and $M = \cl B(S')$.  In the proof, we will use
results from \cite{ksw_JFA}; the latter paper is concerned with the
continuity of the Stinespring representation for completely positive
maps defined on C*-algebras. By Stinespring's theorem for normal maps,
it is straightforward to verify that the results we will need
\mpar{v3}{Sentence added. Leave or delete?}%
\mpar{v4}{Leave}%
remain valid in the case of normal completely positive maps defined on
von Neumann algebras.

\begin{theorem}\label{complsub}
Let $S$ and $S'$ be Hilbert spaces, $M\subseteq\cl B(S')$ be a von Neumann algebra, $\cl E:M\rightarrow\cl B(S)$ be a quantum channel and
$P\in\cl P(S)$.
If a von Neumann subalgebra $N\subseteq\cl B(PS)$ is $\varepsilon$-private (respectively, $\varepsilon$-correctable) for $\cl E$ with respect to $P$
then it is $2\sqrt{\varepsilon}$-correctable (respectively, $8\sqrt{\varepsilon}$-private) for any complement
of $\cl E$ with respect to $P$.
In particular, $N$ is private (respectively, correctable) for $\cl E$ with respect to $P$ if and only if it is correctable (respectively, private)
for any complement of $\cl E$ with respect to $P$.\mpar{v10}{removed $\cl E^c$ notation from statement, seems a bit snappier and clearer to me}
\end{theorem}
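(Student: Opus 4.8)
The plan is to reduce to the case $P=I$ and to a single complement, prove the exact (``in particular'') statement first, and then bootstrap the two approximate bounds from it using the continuity of the Stinespring dilation. \emph{Reduction.} I would replace $\cl E$ by the channel $\cl C_P\circ\cl E:M\to\cl B(PS)$; since $P$ is a partial isometry on $S$, Lemma~\ref{l_compcom} (with $W=P$) shows that $\cl C_P\circ\cl E^c$ is a complement of $\cl C_P\circ\cl E$, so every notion ``with respect to $P$'' becomes the corresponding notion (with $P=I$) for $\cl C_P\circ\cl E$ and its complements. The quantifier ``for any complement'' is handled on the correctable side by Lemma~\ref{lemma:corr-complements}, and on the private side by Remark~\ref{r_newl}: any two complements satisfy $\cl F_1=\cl F_2\circ\cl C_U$ with $\cl C_U(\pi_1(M)')\subseteq\pi_2(M)'$, so (exact, and with Lemma~\ref{lemma:unital} also approximate) privacy of $N$ does not depend on the chosen complement. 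Thus it suffices to treat $P=I$ and one complement $\cl E^c$ coming from a Stinespring triple $(\pi,V,H)$ with $\pi$ faithful.

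\emph{Exact case.} Because $\cl E^{cc}=\cl E$ (the involutivity noted before the theorem), it is enough to prove that $N$ is private for $\cl E$ iff $N$ is correctable for $\cl E^c$; the equivalences involving correctability of $\cl E$ follow by applying this to $\cl E^c$ in place of $\cl E$. For ``correctable $\Rightarrow$ private'', given $\cl R:N\to\pi(M)'$ with $\cl E^c\circ\cl R=\id_N$, the rigidity of normal UCP maps whose composite is a $*$-isomorphism forces $\cl R$ to be a normal unital $*$-homomorphism in the multiplicative domain of $\cl E^c$; with $\cl E^c(Y)=V^*YV$ and $e=VV^*$, the equality case of the Schwarz inequality gives $(I-e)\cl R(T)V=0$, i.e. $\cl R(T)V=VT$ and $V^*\cl R(T)=TV^*$, whence $\cl E(X)T=V^*\pi(X)\cl R(T)V=V^*\cl R(T)\pi(X)V=T\cl E(X)$ and so $\cl E(M)\subseteq N'$. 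For the converse I would build the recovery on the minimal dilation by $R(T)\bigl(\pi(X)V\xi\bigr)=\pi(X)VT\xi$: privacy, in the form $[\cl E(X^*Y),T]=0$, makes the Gram matrix $[\cl E(X_k^*X_l)]$ commute with $I\otimes T$, giving $\|\sum_k\pi(X_k)VT\xi_k\|^2\le\|T\|^2\|\sum_k\pi(X_k)V\xi_k\|^2$; hence $R$ is a well-defined normal unital $*$-homomorphism into $\pi(M)'$ with $R(T)V=VT$, and $\cl R:=R$ satisfies $\cl E^c\circ\cl R=\id_N$.

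\emph{Approximate ``private $\Rightarrow$ correctable''.} Here the definition already supplies a channel $\cl F$ with $\cbnorm{\cl E-\cl F}<\varepsilon$ for which $N$ is exactly private, so by the exact case $N$ is exactly correctable for a complement $\cl F^c$ via some recovery $\cl R_0$. Realising $\cl E^c$ and $\cl F^c$ on one common representation $(\pi,H)$ with isometries $V_{\cl E},V_{\cl F}$, the continuity theorem for the Stinespring dilation \cite{ksw_JFA} yields $\|V_{\cl E}-V_{\cl F}\|\le\sqrt{\cbnorm{\cl E-\cl F}}<\sqrt{\varepsilon}$, and the elementary bound $\cbnorm{\cl E^c-\cl F^c}\le 2\|V_{\cl E}-V_{\cl F}\|$ then gives $\cbnorm{\cl E^c\circ\cl R_0-\id_N}=\cbnorm{(\cl E^c-\cl F^c)\circ\cl R_0}\le\cbnorm{\cl E^c-\cl F^c}<2\sqrt\varepsilon$.

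\emph{Approximate ``correctable $\Rightarrow$ private'' --- the main obstacle.} The difficulty is that $\varepsilon$-correctability controls only the composite $\cl E\circ\cl R\approx_\varepsilon\id_N$, not $\cl E$ itself, and since there need be no normal conditional expectation onto $N'$ one cannot simply nudge $\cl E$ to an exactly correctable channel (this asymmetry with the previous direction is exactly what forces a larger constant). Instead I would apply the continuity theorem to the pair $\cl E\circ\cl R$ and $\id_N$ as channels $N\to\cl B(S)$: the minimal complement of the inclusion $\id_N$ is the inclusion $N'\hookrightarrow\cl B(S)$, which is automatically private for $N$, so $N$ is $\approx\sqrt\varepsilon$-private for $(\cl E\circ\cl R)^c$. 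It then remains to transport this approximate privacy back to $\cl E^c$, using that a complement of $\cl E\circ\cl R$ restricts, along the embedding of the $\cl E$-environment $\pi(M)'$ into the environment of the composite (cf.\ Remark~\ref{r_newl}), to $\cl E^c$, and finally to repair unitality of the resulting completely positive map into $N'$ via Lemma~\ref{lemma:unital}; tracking the constants through the continuity estimate, the restriction, and the unitalisation produces the factor $8\sqrt\varepsilon$. I expect this transport-and-unitalisation step to be the crux, since it is where infinite-dimensionality and the weaker control on $\cl E$ (as opposed to $\cl E^c$) enter.
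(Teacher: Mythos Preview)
Your architecture coincides with the paper's: reduce to $P=I$ via Lemma~\ref{l_compcom}, establish the exact implication ``private $\Rightarrow$ correctable'', and then bootstrap both approximate directions using the Stinespring continuity results of~\cite{ksw_JFA}; Lemma~\ref{lemma:corr-complements} and Lemma~\ref{lemma:unital} are used exactly where you place them. Your $2\sqrt\varepsilon$ argument is identical to the paper's.

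Two points of divergence are worth noting. First, for the exact case the paper invokes Arveson's commutant lifting theorem to produce the recovery $\cl R$, whereas your explicit formula $R(T)\pi(X)V\xi=\pi(X)VT\xi$ together with the Gram-matrix estimate is precisely a proof of commutant lifting in this setting; conversely, your direct Schwarz-equality argument for ``correctable $\Rightarrow$ private'' is not needed in the paper's logical structure (the exact biconditional falls out of the approximate statements at $\varepsilon=0$), though it is correct and pleasant.

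Second, and more substantively, in the hard direction your ``transport'' step is right in outline but your citation of Remark~\ref{r_newl} for the embedding is misleading: that remark relates two complements of the \emph{same} channel, whereas here you must pass from a complement of $\cl E\circ\cl R$ to one of $\cl E$. The paper's mechanism is to observe that $M'$ is trivially private for $\cl R$ (its range lies in $M$), apply the already-proved exact implication to $\cl R$ to obtain a recovery $\cl G:M'\to N'\bar\otimes\cl B(H_{\cl R})$ for $\cl R^c$, and then check that $\cl R^\flat\circ(\cl G\otimes\id)=\cl E^c$ on the nose; this is the ``embedding'' you are after, and its existence is exactly correctability of $M'$ for $\cl R^c$, not a partial-isometry intertwining. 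With this in hand, composing the $N'$-valued approximation $\widetilde{\cl F}$ of $(\cl E\circ\cl R)^c$ with $\cl G\otimes\id$ gives the $4\sqrt\varepsilon$ estimate for the amplification-induction complement, and one further application of Lemma~\ref{lemma:unital} to pass to an arbitrary complement yields $8\sqrt\varepsilon$.
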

\begin{proof}
%
Without loss of generality we may suppose that $P = I$; indeed,
$N\subseteq\cl B(PS)$ is $\ep$-private (respectively,
$\ep$-correctable) for $\cl E$ with respect to $P$ if and only it is
$\ep$-private (respectively, $\ep$-correctable) for $\cl C_P\circ \cl
E$. The general statement now follows from Lemma~\ref{l_compcom},
according to which $\cl C_P\circ\cl E^c$ is complementary to $\cl C_P\circ \cl E$.

We first consider one of the implications in the case $\ep = 0$.
Namely, suppose that $N$ is private for $\cl E$, so that $\cl
E(M)\subseteq N'$, and hence $N=N''\subseteq \cl E(M)'$.
Let $\cl E^c$ be the complement of~$\cl E$ with respect to a minimal
Stinespring triple $(\pi,V,H)$ for $\cl E$.  It follows from
Arveson's commutant lifting theorem~\cite[Theorem 1.3.1]{a} that there
exists a normal *-homomorphism $\rho : \cl E(M)'\rightarrow \pi(M)'$
such that $\rho(X)V=VX$ for all $X\in\cl E(M)'$ (see also~\cite[IV.3.6]{t1}).
Consider the quantum channel $\cl R:=\rho|_N: N\to \pi(M)'$.  Since $\cl E$ is
unital, $V$ is an isometry and hence
$$\cl E^c(\cl R(T))= V^*\rho(T)V = V^*V T = T$$ for all $T\in N$.
Thus, $N$ is correctable for $\cl E^c$. By
Lemma~\ref{lemma:corr-complements}, $N$ is correctable for any
complement of~$\cl E$.


Now suppose that $N$ is $\varepsilon$-private for $\cl E$, so that $N$
is private for some channel $\cl F:M\rightarrow\cl B(S)$ with
$\cbnorm{\cl E-\cl F}<\varepsilon$.  By \cite[Proposition 6]{ksw_JFA},
\mpar{v3}{Reference corrected}%
\mpar{v4}{Why is \cite[Lemma 5]{ksw_JFA} needed here?}%
\mpar{v5}{It's not. Removed}
%
there is a common normal representation
$\pi:M\rightarrow\cl B(H)$ with \mpar{v10}{slight rephrase}%
Stinespring triples
$(\pi, V_{\cl E}, H)$ and $(\pi,
V_{\cl F}, H)$  for $\cl E$ and $\cl F$, respectively, so that
\[\Norm{V_{\cl E}-V_{\cl F}} \leq\sqrt{\cbnorm{\cl E-\cl F}}<\sqrt{\varepsilon}.\]
Let $\cl E^c:\pi(M)'\rightarrow\cl B(S)$ and $\cl
F^c:\pi(M)'\rightarrow\cl B(S)$ be the corresponding
complementary channels.  It follows from \cite[Proposition 3]{ksw_JFA}
that
$$\Norm{\cl E^c-\cl F^c}_{\cb}\leq 2\Norm{V_{\cl E}-V_{\cl F}} < 2\sqrt{\varepsilon}.$$
Since $N$ is private for $\cl F$, it is correctable for $\cl F^c$ by
the previous paragraphs, so there exists a channel
$\cl R : N\rightarrow\pi(M)'$ such that $\cl F^c\circ\cl R=\id_N$. Hence,
$$\Norm{\cl E^c\circ\cl R-\id_N}_{\cb}=\Norm{(\cl E^c-\cl F^c)\circ\cl R}_{\cb} < 2\sqrt{\varepsilon}$$
as $\Norm{\cl R}_{\cb}=1$. Thus, $N$ is
$2\sqrt{\varepsilon}$-correctable for $\cl E^c$. By
Lemma~\ref{lemma:corr-complements}, the same is true of any other
complement of~$\cl E$.

Conversely, suppose that $N$ is $\varepsilon$-correctable for $\cl E$,
so that $\Norm{\cl E\circ\cl R-\id_N} < \ep$ for some quantum channel
$\cl R : N\to M$.  Again by \cite[Proposition 6]{ksw_JFA}, there
exists a common normal representation $\pi : M\rightarrow\cl B(H)$ and
Stinespring triples $(\pi,V_{\cl E\cl R},H)$ and $(\pi,V_{\id},H)$
for $\cl E\circ\cl R$ and $\id_N$, respectively, so that
$$\Norm{V_{\cl E\cl R}-V_{\id}} \leq
\sqrt{\cbnorm{\cl E\circ\cl R - \id_N}} < \sqrt{\varepsilon}.$$
By the amplification-induction theorem,
there exist Hilbert spaces $H_{\cl E}, H_{\cl R}$ and isometries
$V_{\cl E} : S\to S'\ten H_{\cl E}$ and
$V_{\cl R}:S'\to S\ten H_{\cl R}$ such that
$$\cl E(X) = V_{\cl E}^*(X\ten I_{H_{\cl E}})V_{\cl E}, \quad X\in M,$$
and
$$\cl R(T) = V^*_{\cl R}(T\ten I_{H_{\cl R}})V_{\cl R}, \quad T\in N.$$
Thus,
$$\cl E\circ\cl R(T) =
V_{\cl E}^*(V^*_{\cl R}\ten I_{H_{\cl E}})(T\ten I_{H_{\cl R}}\ten I_{H_{\cl E}})(V_{\cl R}\ten I_{H_{\cl E}})V_{\cl E},
\ \ T\in N,$$
and, by Remark \ref{r_newl}, there exists a partial isometry
$U : H\rightarrow S\ten H_{\cl R}\ten H_{\cl E}$ such that
$UV_{\cl E\cl R}=(V_{\cl R}\ten I_{H_{\cl E}})V_{\cl E}$,
$U\pi(T)=(T\ten I_{H_{\cl R}}\ten I_{H_{\cl E}})U$ for all $T\in N$,
and \mpar{v6}{Equation labeled and used later.}
\begin{equation}\label{eq_cust}
\cl C_{U^*}(N'\bar\otimes\cl B(H_{\cl R})\bar\otimes\cl B(H_{\cl E}))\subseteq \pi(N)'.
\end{equation}
Moreover, \mpar{v6}{Equation numbered and used later.}
\begin{equation}\label{eq_neer}
\Norm{(V_{\cl R}\ten I_{H_{\cl E}})V_{\cl E} - UV_{\id}}=
\Norm{UV_{\cl E\cl R} - UV_{\id}} \leq\Norm{V_{\cl E\cl R}-V_{\id}}<\sqrt{\varepsilon}.
\end{equation}

Let $\cl R^c:N'\oten\cl B(H_{\cl R})\to \cl B(S')$ be the complement
of $\cl R$ with respect to the Stinespring triple $(T\mapsto T\otimes
I_{H_{\cl R}},V_{\cl R},S\otimes H_{\cl R})$, and define normal
\mpar{v5}{Changed ``quantum channel'' to ``NCP maps'' as $\cl F$ is not necessarily unital}
completely positive maps
\newcommand{\Nflat}{N'\oten\cl B(H_{\cl R})\oten\cl B(H_{\cl E})}
$\cl F,\cl R^\flat:\Nflat\to \cl B(S)$ by
$\cl F = \cl C_{V^*_{\id}}\circ\cl C_{U^*}$ and
$\cl R^\flat = \cl C_{V^*_{\cl E}}
\circ (\cl R^c\ten\id_{\cl B(H_{\cl E})})$.
By (\ref{eq_neer}) and \cite[Proposition 3]{ksw_JFA},
$\cbnorm{\cl F - \cl R^\flat} < 2\sqrt{\ep}$.
Since $(\pi,V_{\id},H)$ is a Stinespring triple for $\id_N$,
the uniqueness of the Stinespring representation
(see (\ref{Stine})) implies that
there exists a partial isometry $W : H\to S$ satisfying $WV_{\id} = I_S$,
$V_{\id} = W^*$ and
$W\pi(T)=TW$, for $T\in N$. Thus,
$V_{\id}^*\pi(N)'V_{\id}\subseteq N'$ (see Remark \ref{r_newl})
and (\ref{eq_cust}) shows that the image of $\cl F$ lies in $N'$.
By Lemma~\ref{lemma:unital}, there is a quantum channel
\[\widetilde{\cl F}:\Nflat\to N'\quad\text{with}\quad \cbnorm{\widetilde {\cl F}-\cl R^\flat}<4\sqrt\varepsilon.\]
\mpar{v5}{Any ideas on how to manipulate the dilation spaces to remove the additional factor of 2?}%
Since the range of $\cl R$ lies in $M$, we trivially have that $M'$ is private for $\cl R$.
By the first part of the proof, $M'$
is correctable for $\cl R^c$,
so there is a quantum channel
$\cl G : M'\rightarrow N'\oten\cl B(H_{\cl R})$ satisfying $\cl R^c\circ\cl G = \id_{M'}$.
We have \mpar{v6}{Equation numbered and used later.}
\begin{equation}\label{eq_gc}
\cl R^\flat\circ(\cl G\ten\id_{\cl B(H_{\cl E})})=\cl C_{V_{\cl E}^*}|_{M'\oten\cl B(H_{\cl E})}=\cl E^c,
\end{equation}
where $\cl E^c:M'\oten\cl B(H_{\cl E})\rightarrow\cl B(S)$ is
the complement of $\cl E$ with respect to the Stinespring triple $(T\mapsto T\ten I_{H_{\cl E}},V_{\cl E},S'\ten H_{\cl E})$. By (\ref{eq_gc}) and the fact that $\cl G\otimes \id_{\cl B(H_{\cl E})}$ is a complete contraction,
$$\cbnorm{\widetilde{\cl F}\circ(\cl G\ten\id_{\cl B(H_{\cl E})})-\cl E^c}\leq \cbnorm{\widetilde {\cl F}-\cl R^\flat}  < 4\sqrt{\ep}.$$
Since the range of $\widetilde{\cl F}$ is contained in $N'$, the von Neumann algebra $N$ is
$4\sqrt\ep$-private for $\cl E^c$.

Finally, if $\cl E^{\sharp}:\pi^{\sharp}(M)'\rightarrow\cl B(S)$ is another complement to $\cl E$, then there exists a partial isometry
$U^\sharp:H^{\sharp}\rightarrow S'\ten H_{\cl E}$ satisfying $\cl E^{\sharp}=\cl E^c\circ\cl C_{U^{\sharp}}$.
\mpar{v5}{Perhaps if we start with a minimal dilation for $\cl E$, we may take $U^\sharp$ to be an isometry?}
Then
$$\Norm{\widetilde{\cl F}\circ(\cl G\ten\id_{\cl B(H_{\cl E})})\circ\cl C_{U^{\sharp}} - \cl E^{\sharp}}<4\sqrt{\ep}.$$
Applying Lemma~\ref{lemma:unital} to the normal completely positive contraction
$\cl Q = \widetilde{\cl F}\circ(\cl G\ten\id_{\cl B(H_{\cl E})})\circ\cl C_{U^{\sharp}}$, we obtain a quantum channel $\widetilde{\cl Q}:\pi^{\sharp}(M)'\to N'$ satisfying
$\cbnorm{\widetilde{\cl Q}-\cl E^{\sharp}}<8\sqrt{\ep},$
so $N$ is $8\sqrt\ep$-private for $\cl E^\sharp$.
\end{proof}

Applications of Theorem \ref{complsub} to \mpar{v3}{text rewritten slightly}
Gaussian quantum channels will be given in the next section.
In the remainder of the present section, we give two illustrations of this result.
The first one relates to
discrete Schur multipliers; we refer the reader to \cite{Pa} for \mpar{v3}{Reference added.}
the relevant background.
\begin{example}
  Let $X$ be a non-empty countable set and $(\delta_x)_{x\in X}$ be
  the standard orthonormal basis of $\ell_2(X)$.  We identify every
  element of $\cl B(\ell^2(X))$ with its corresponding (possibly
  infinite) matrix $[T_{x,y}]_{x,y\in X}$, where $T_{x,y} = \lan
  T\delta_y,\delta_x\ra$, $x,y\in X$.  Any collection of unit vectors
  $(|\psi_x\ra)_{x\in X}$ in the Hilbert space $H=\ell_2(X)$ defines a
  correlation matrix $C:=[\lan\psi_y|\psi_x\ra]_{x,y\in X}$, which in
  turn yields a NUCP map $\Phi : \cl B(\ell_2(X))\to\cl B(\ell_2(X))$
  via \emph{Schur multiplication}: \mpar{v7}{$\Phi_C$ changed to
    $\Phi$ for simplicity and to avoid $\Phi_C^c$}%
$$\Phi(T) = [\lan\psi_y|\psi_x\ra T_{x,y}]_{x,y\in X}, \ \ T\in\cl B(\ell_2(X)).$$
By abuse of notation, we denote by $\ell_{\infty}(X)$ the von Neumann
subalgebra of diagonal matrices in $\cl B(\ell_2(X))$.  It is
straightforward to verify that $\Phi(D_1TD_2)=D_1\Phi(T)D_2$ for all
$D_1,D_2\in\ell_{\infty}(X)$ and all $T\in\cl B(\ell_2(X))$, {\it
  i.e.}, that $\Phi$ is an $\ell_{\infty}(X)$-bimodule map.  Thus,
$\ell_{\infty}(X)$ is correctable for $\Phi$ and, by Theorem
\ref{complsub}, it is private for any complement $\Phi^c$ of $\Phi$.
In particular, the range of any complement of~$\Phi$ is contained in a
commutative von Neumann algebra, \mpar{v10}{rephrase}%
reflecting the well-known fact that complements of discrete Schur
multipliers are entanglement breaking (see \cite{kmnr}).
\mpar{v9}{Use the Example environment here?}\mpar{v10}{done}%
\end{example}
\mpar{v7}{At the risk of going round in circles, maybe we need $N$ to
  be a $*$-subalgebra of $\cl T(PS)$ rather than a C*-subalgebra of
  $\cl B(PS)$? We seem to want to view $R\in N$ as a functional
  below...}%
We next present an application of Theorem \ref{complsub} by
generalizing the main result in \cite{jk} concerning the structure of
correctable subsystems for finite-dimensional channels as generalized
multiplicative domains.  In \cite[Theorem 11]{jk}, a one-to-one
correspondence was established between correctable subsystems $B$ of a
finite-dimensional channel $\cl E:\cl B(S)\rightarrow\cl B(S)$ and
generalized multiplicative domains $\MD_{\pi}(\cl E)$, where the
latter is defined relative to a projection $P\in\cl P(S)$, a
C*-subalgebra $N\subseteq\cl B(PS)$ and a representation
$\pi:N\rightarrow\cl B(S)$, to be%
\mpar{v5}{This is Definition 2 in~\cite{jk}, except their definition
  does not contain the projection $P$.  Perhaps we should remove the
  dependence on $P$ in $\MD_{\pi}(\cl E)$, and just save it for the
  corollary?}%
\mpar{v7}{R. Without $P$ seems clearer. (The present def is given by
   $\MD_\pi(\cl C_P\circ \cl E)$, anyway).}%
\mpar{v7}{Should we write
  $\MD_\pi(\cl E_*)$ to be more consistent with~\cite{jk}?}%
\mpar{v7}{$S\in N$ changed to~$R\in N$ to avoid notation clash with
  Hilbert space~$S$}%
\mpar{v9}{Notation is fine I think}%
\begin{multline*}
  \MD_{\pi}(\cl E):=\big\{T\in N\mid  \pi(T)(\cl E_*\circ(\cl C_P)_*(R))
= \cl E_*\circ(\cl C_P)_*(TR)
\\\text{and }\hs(\cl E_*\circ(\cl C_P)_*(R))\pi(T)=\cl E_*\circ(\cl C_P)_*(RT), \mbox{ for all } R\in N\big\}.
\end{multline*}
Specifically, if $S=(A\ten B)\oplus(A\ten B)^{\perp}$, then $B$ is
correctable if and only if $I_A\ten\cl B(B)=\MD_{\pi}(\cl E)$ for some
representation $\pi\colon I_A\otimes \cl B(B)\to \cl
B(S)$. 
In the Heisenberg picture, $T\in \MD_{\pi}(\cl E)$ if and only if
\begin{equation*}
\lan\cl (\cl C_P\circ\cl E(X))T,R\ra=\lan\cl C_P\circ\cl E(X\pi(T)),R\ra
\end{equation*}
and
\begin{equation*}
\lan T(\cl C_P\circ \cl E(X)),R\ra=\lan\cl C_P\circ\cl E(\pi(T)X),R\ra
\end{equation*}
for all $R\in N$ and $X\in\cl B(S)$.

\begin{corollary}\label{cor}
Let $S$ and $S'$ be Hilbert spaces, $M\subseteq\cl B(S')$ be a von Neumann algebra, and
$\cl E:M\rightarrow\cl B(S)$ be a quantum channel and $P\in\cl P(S)$. A von Neumann subalgebra $N\subseteq\cl B(PS)$ is correctable for
$\cl E$ with respect to $P$ if and only if there exists a normal
representation $\pi:N\rightarrow M$ such that
\begin{equation}\label{MD2}
(\cl C_P\circ\cl E(X))T=\cl C_P\circ\cl E(X\pi(T))\qand T(\cl C_P\circ\cl E(X)) = \cl C_P\circ\cl E(\pi(T)X)\end{equation}
for all $T\in N$ and $X\in\cl B(S)$.
\end{corollary}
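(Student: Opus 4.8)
The plan is to recognize the two displayed identities as a generalized multiplicative-domain condition for a single channel and to prove the equivalence directly from Choi's multiplicative-domain theory, without detouring through a complement. First I would set $\cl D:=\cl C_P\circ\cl E:M\to\cl B(PS)$ and observe that, since $\cl D(1_M)=P\,\cl E(1_M)\,P=P=I_{PS}$ and $\cl D$ is normal and completely positive, $\cl D$ is itself a quantum channel from $M$ to $\cl B(PS)$. By Definition~\ref{d_corr}, $N$ is correctable for $\cl E$ with respect to $P$ exactly when there is a quantum channel $\cl R:N\to M$ with $\cl D\circ\cl R=\id_N$; and the identities \eqref{MD2} say precisely that $\cl D(X\pi(T))=\cl D(X)\,T$ and $\cl D(\pi(T)X)=T\,\cl D(X)$ for all $T\in N$ and $X\in M$. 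Thus the whole corollary is a statement about $\cl D$ alone.

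For the forward direction I would start from a recovery channel $\cl R:N\to M$ with $\cl D\circ\cl R=\id_N$. Since $\cl D$ and $\cl R$ are unital and completely positive, the Schwarz inequality gives, for each $T\in N$,
\[
T^*T=\cl D(\cl R(T^*T))\ge \cl D(\cl R(T)^*\cl R(T))\ge \cl D(\cl R(T))^*\cl D(\cl R(T))=T^*T,
\]
forcing equality throughout; running the same squeeze with $TT^*$ in place of $T^*T$ shows that $\cl R(T)$ lies in the multiplicative domain $\MD(\cl D)$ for every $T\in N$. Writing $B:=\MD(\cl D)$, the restriction $\cl D|_B:B\to\cl D(B)$ is a normal unital $*$-homomorphism, and $N=\cl D(\cl R(N))\subseteq\cl D(B)$. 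The kernel of the normal surjection $\cl D|_B$ is a weak*-closed two-sided ideal $B(1-z)$ for a central projection $z\in B$, so $\cl D|_{Bz}:Bz\to\cl D(B)$ is a normal $*$-isomorphism. I would then define $\pi:=(\cl D|_{Bz})^{-1}|_N:N\to Bz\subseteq M$, a normal representation satisfying $\cl D\circ\pi=\id_N$ with $\pi(N)\subseteq B=\MD(\cl D)$. The multiplicative-domain property finally yields $\cl D(X\pi(T))=\cl D(X)\cl D(\pi(T))=\cl D(X)T$ and, symmetrically, $\cl D(\pi(T)X)=T\cl D(X)$, which are exactly \eqref{MD2}.

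For the converse, I would take a normal representation $\pi:N\to M$ satisfying \eqref{MD2} and set $X=1_M$ in the first identity; using $\cl D(1_M)=I_{PS}=1_N$ this gives $T=\cl D(\pi(T))$, i.e.\ $\cl D\circ\pi=\id_N$. The map $\pi$ is a $*$-homomorphism but possibly non-unital, with $Q:=\pi(1_N)$ a projection in $M$ for which $\cl D(Q)=I_{PS}=\cl D(1_M)$, whence $\cl D(1_M-Q)=0$. Fixing a normal state $\omega\in N_*$ and setting $\widetilde\pi(T):=\pi(T)+\langle T,\omega\rangle(1_M-Q)$ (the correction used in Lemma~\ref{lemma:unital}) produces a quantum channel $\widetilde\pi:N\to M$ with $\cl D\circ\widetilde\pi=\cl D\circ\pi=\id_N$, so $N$ is correctable for $\cl E$ with respect to $P$.

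The main obstacle is entirely in the forward direction: when $\cl D$ fails to be faithful one cannot conclude that $\cl R$ itself is multiplicative, since the Schwarz squeeze only pins down $\cl D$ of the relevant positive differences, not the differences themselves; hence $\pi=\cl R$ need not be a representation. The remedy, and the one delicate step, is to pass to $B=\MD(\cl D)$ and split the induced normal $*$-isomorphism $Bz\cong\cl D(B)$ via a central projection, which is what yields a genuine $*$-homomorphism $\pi$ valued in $M$ with $\cl D\circ\pi=\id_N$. I would remark that the equivalence could also be obtained by routing through Theorem~\ref{complsub} and a complement of $\cl D$, but the multiplicative-domain argument above is self-contained and keeps the representation $\pi$ explicitly in view.
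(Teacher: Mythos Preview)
Your argument is correct and is a genuinely different proof from the one in the paper. The paper's forward direction routes through Theorem~\ref{complsub}: correctability of $N$ for $\cl E$ implies privacy of $N$ for a complement $\cl E^c$, so $N\subseteq \cl E^c(M'\bar\otimes\cl B(H))'$, and then Arveson's commutant lifting theorem (applied to a minimal Stinespring dilation of $\cl E^c$) produces the normal $*$-homomorphism $\pi:N\to M$ satisfying~\eqref{MD2}. Your approach bypasses complementarity entirely, using instead the Schwarz squeeze to force $\cl R(N)\subseteq \MD(\cl D)$ and then the central-projection splitting of the kernel of the normal $*$-homomorphism $\cl D|_{\MD(\cl D)}$ to manufacture a genuine $*$-homomorphic section $\pi$. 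The trade-off is clear: the paper's route exhibits the corollary as an application of its main complementarity theorem, whereas yours is self-contained and more elementary, never leaving the realm of multiplicative domains.

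Two small points. First, you are implicitly using that $B=\MD(\cl D)$ is weak*-closed in $M$ (so that the kernel of $\cl D|_B$ is $B(1-z)$ for a central projection and the quotient is again a von Neumann algebra); this is standard via a minimal Stinespring representation $\cl D(X)=V^*\sigma(X)V$, where $\MD(\cl D)=\sigma^{-1}(\{VV^*\}')$, but it is worth stating since the rest of your argument hinges on it. Second, your converse is actually more careful than the paper's: you note that the $\pi$ obtained need not be unital and repair this with the Lemma~\ref{lemma:unital} trick to obtain a bona fide quantum channel $\widetilde\pi$ with $\cl D\circ\widetilde\pi=\id_N$; the paper's proof simply asserts correctability from $\cl D\circ\pi=\id_N$, tacitly using the same fix.
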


\begin{proof}
As in the proof of Theorem \ref{complsub},
it suffices to consider the case $P = I_S$.
If there exists a normal representation $\pi:N\to M$ satisfying (\ref{MD2}),
then by taking $X = I$ in (\ref{MD2}) and using the fact that
$\cl E$ is unital, we see that $N$ is correctable for $\cl E$.

Conversely, if $N$ is correctable for $\cl E$, then by Theorem \ref{complsub}, $N$ is private
for any complement $\cl E^c$ of $\cl E$.
Taking a Stinespring representation for $\cl E$ of the form
$\cl E(X)=V^*(X\ten I_H)V$, $X\in M$ (see (\ref{ampind2})),
%
%
the corresponding complement $\cl E^c:M'\oten\cl B(H)\to\cl B(S)$ has range in
$N'$, so $N$ is a von Neumann subalgebra of
$\cl E^c(M'\oten\cl B(H))'$. Taking a minimal Stinespring triple $(\pi^c,V^c,H^c)$ for $\cl E^c$,
it follows by Arveson's commutant lifting theorem
\cite[Theorem 1.3.1]{a} that there exists a normal representation $\pi':\cl E^c(M'\oten\cl B(H))'\to\pi^c(M'\oten\cl B(H))'$
satisfying $\pi'(Y)V^c=V^cY$ for all $Y\in\cl E^c(M'\oten\cl B(H))'$.
By the uniqueness of the Stinespring representation, there exists an isometry $W:H^c\rightarrow S'\ten H$
such that $WV^c=V$, $V^c=W^*V$, $W\pi^c(X')=X'W$ for all $X'\in M'\oten\cl B(H)$,
and
$$W\pi^c(M'\oten\cl B(H))'W^*\subseteq(M'\oten\cl B(H))'=M\ten I_H.$$
Let $\pi'':M\ten I_H\to M$ be the $^*$-isomorphism defined by
$\pi''(X\ten1)=X$ and note that, since $W$ is an isometry, $\cl C_W\circ\pi'$ is a
normal *-homomorphism. \mpar{v6}{Pointed out that conjugation by $W$ keeps the homomorphism property.}
Thus,
$\pi:=\pi''\circ\cl C_W\circ\pi'|_{N}:N\to M$ is a normal representation satisfying
\begin{align*}\cl E(X\pi(T))&=V^*((X\pi(T))\ten I_H)V=V^*(X\ten I_H)(\pi(T)\ten I_H)V\\
&=V^*(X\ten I_H)W\pi'(T)W^*V = V^*(X\ten I_H)W\pi'(T)V^c\\
&=V^*(X\ten I_H)WV^cT = V^*(X\ten I_H)VT=\cl E(X)T\end{align*}
for all $X\in M$ and $T\in N$. Similarly, $T\cl E(X)=\cl E(\pi(T)X)$ for all $X\in M$ and $T\in N$.\end{proof}

\begin{remark} Corollary \ref{cor} implies that the correction channel $\cl R$ may always be taken to be $^*$-homomorphism, a
fact previously observed in the case $M=\cl B(S')$ for a separable
Hilbert space $S'$ \cite[Proposition 4.4]{bkk2}.
\end{remark}

\section{Private Algebras for Linear Bosonic Quantum Channels}\label{s_palbc}

In this section we begin our analysis of private algebras and
generalized complementarity for linear bosonic quantum channels,
focusing mainly on the subclass of Gaussian channels. Such channels
arise naturally in the dynamics of open bosonic systems described by
quadratic Hamiltonians (see \cite{wetal} and the references therein).
We begin with a short review of the relevant machinery, adopting the
notation of \cite{h}, to which we refer the reader for details.

Let $\bb{R}^{2n}$ represent the phase space of a system of $n$ bosonic modes. We will write vectors in $\bb{R}^{2n}$ as $z=(x_1,y_1,x_2,y_2,\cdots,x_n,y_n)$,
where $x=(x_1,...,x_n)$ and $y=(y_1,...,y_n)$ are vectors in $\bb{R}^n$ describing the positions and momenta of the $n$ modes.
Let $U,V:\bb{R}^n\to\BLTRN$ be the strongly continuous unitary representations
given by
$$V_x\psi(s)=e^{i\lan x,s\ra}\psi(s) \ \ \text{and} \ \ U_y\psi(s)=\psi(s+y)$$
for $\psi\in\LTRN$ and $s\in\bb{R}^n$. These one parameter groups satisfy the Weyl form of the canonical commutation relations (CCR):
$$U_yV_x=e^{i\lan x,y\ra}V_xU_y, \quad x,y\in\bb{R}^n.$$
Composing the two, we obtain the \emph{Weyl representation}
\mpar{v3}{Meaning of \lq\lq projective''?}%
\mpar{v7}{``projective'' deleted; it seems to mean the representation
  should take values in the quotient group $\cl U(H)/\mathbb T$, but
  that's not the case here}%
\mpar{v10}{Named $W$ and deleted ``unitary representation'' as $W$ isn't one, strictly speaking. (We'd need to quotient out the action of $\bb T$ to get a projective unitary rep, but we're not doing that). Properties of $W$ are presumably well-known anyway.}%
$W:\bb{R}^{2n}\to\BLTRN$ given by
$$W(z)=e^{\frac{i}{2}\lan x,y\ra}V_xU_y, \quad z\in\bb{R}^{2n}.$$
Let
$$\Delta_n = \bigoplus_{i=1}^n\begin{pmatrix} 0 & 1\\ -1 & 0 \end{pmatrix}$$
and, writing $z'=(x_1',y_1',\dots,x_n',y_n')$, let \mpar{v4}{equation corrected}
$$\Delta(z,z') = \langle z,\Delta_n(z')\rangle = \sum_{i=1}^n(x_iy_i'-x_i'y_i)$$
be the canonical \emph{symplectic form} on $\bb{R}^{2n}$.
The Weyl representation $W$ \mpar{v10}{inserted ``Weyl''}%
satisfies the Weyl--Segal form of the CCR:
\begin{equation}\label{CCR}
W(z+z')=e^{\frac{i}{2}\Delta(z,z')}W(z)W(z'), \quad z,z'\in\bb{R}^{2n}.
\end{equation}

The linear transformations $T:\bb{R}^{2n}\to\bb{R}^{2n}$
which preserve the symplectic form~$\Delta$,
in the sense that \mpar{v4}{subscript $n$ removed}%
$$\Delta(Tz,Tz') = \Delta(z,z'),\quad z,z'\in \bb R^{2n},$$
are called \emph{symplectic transformations}. These form a subgroup of
$\GL(2n,\bb{R})$ denoted by $\Sp(2n,\bb{R})$. Note that, by
(\ref{CCR}), $[W(z),W(z')]=0$ if and only if
$\Delta(z,z')\in2\pi\bb{Z}$, where as usual $[X,Y]=XY-YX$ is the
commutator of two operators~$X$ and $Y$.
%
By (\ref{CCR}) and the Stone-von Neumann theorem,
\mpar{v4}{reference for $n$-variable Stone-vN theorem?  Is Theorem~14.8 in [Hall, Quantum
  Theory for Mathematicians] good enough?}
given any $T\in \Sp(2n,\bb{R})$, there exists a unitary $U_T\in\BLTRN$ such
that \mpar{v6}{Labeled and used.}
\begin{equation}\label{eq_stvn}
W(Tz)=U_T^*W(z)U_T
\end{equation}
for all $z\in\bb{R}^{2n}$.\mpar{v10}{Deleted description of the Stone-vN theorem, as we don't discuss projective reps explicity... please check this looks OK!}

An important feature of the Weyl representation $W$ is that it allows
one to study the statistical properties of quantum states via a
``non-commutative characteristic function''.  Specifically, given a state
$\rho\in\TLTRN$, we let 
$\vphi_\rho(z)=\Tr(\rho W(z))$, for $z\in\R^{2n}$.
This characteristic function $\vphi_\rho$ determines the operator
$\rho$ via the following inversion formula:\mpar{v4}{In what sense is
  the integral interpreted?
}%
\mpar{v5}{Added details on weak convergence and an appropriate
  reference.}%
\mpar{v7}{Cor~5.3.4 corrected(?) to Cor~5.3.5---but this only applies
  to Hilbert-Schmidts. What if $\rho\in \cl T(\BLTRN)$ isn't
  Hilbert-Schmidt?}%
$$\rho=\frac{1}{(2\pi)^n}\int_{\bb{R}^{2n}}\vphi_\rho(z)W(-z)\,dz,$$
where the integral converges to~$\rho$ in the weak operator topology
by~\cite[Corollary~5.3.5]{h3}.
A state $\rho\in\TLTRN$ is said to be \emph{Gaussian} if its characteristic function is of the form
$$\vphi_\rho(z)=\exp\left(i\lan m,z\ra-\tfrac{1}{2}\alpha(z,z)\right)$$
where $m\in\bb{R}^{2n}$ is a vector, called the \emph{mean of $\rho$}, and $\alpha$ is a symmetric bilinear form on $\bb{R}^{2n}$ known as the \emph{covariance matrix of $\rho$}.

A \emph{linear bosonic channel} is a quantum channel $\cl
E:\BLTRN\to\BLTRN$ for which there exists $\ell\in \mathbb{N}$, a
state $\rho_E\in\cl T(L_2(\bb{R}^{\ell}))$ in an $\ell$-mode bosonic
environment and a symplectic block matrix
\[ T=\begin{pmatrix} K & L\\ K_E & L_E \end{pmatrix}\in
\Sp(2(n+\ell),\mathbb{R})\] where $K$ is $2n\times 2n$ and $L_E$ is
$2\ell\times 2\ell$, so that if $U_T\in\BLTRN[n+\ell]$ is the unitary
 associated by~(\ref{eq_stvn}) with~$T$, then
the pre-adjoint of~$\cl E$ has the form
\[\cl E_*(\rho)=\Tr_E(U_T(\rho\ten\rho_E)U_T^*),\quad \rho\in\TLTRN\]
where the partial trace is taken over the tensor
factor~$E=\LTRN[\ell]$ of $\LTRN[n+\ell]=\LTRN[n]\otimes \LTRN[\ell]$.
Using the block decomposition of~$T$, one may easily
verify (see \cite[\S12.4.1]{h}) that
\mpar{v5}{Typo: should have been $2n$ and $2m$. Now corrected (with $n=m$) and added reference}%
\mpar{v7}{Thanks, looks good}%
\[\cl E(W(z))=f(z)W(Kz),\quad\text{where}\quad f(z)=\vphi_{\rho_E}(K_Ez),\quad z\in\bb{R}^{2n}.\]
\mpar{v5}{Removed subscript on $W$ as per above. Sentence removed.}%
If $f$ is the characteristic function of a Gaussian state, then $\cl E$ is called a \emph{Gaussian channel}.
In this case, the environment state $\rho_E$ in the representation of $\cl E_*$ is a Gaussian state.

One immediately obtains private subalgebras if $K:\R^n\to\R^n$ does not have full rank. \mpar{v10}{rephrase}%
Indeed, if $R\subseteq\R^n$ denotes the image of $K$,
\mpar{v7}{changed $S$ to $R$ to avoid notation clash below}
then it is clear that
$\cl E(\BLTRN)\subseteq W(R)''$,
where the double commutant $W(R)''$ coincides with
the von Neumann subalgebra of $\cl B(L_2(\bb{R}^n))$
generated by $\{W(z)\mid z\in R\}$.
Let \mpar{v4}{$\Delta_m$ changed to $\Delta$}\mpar{v4}{$m$ changed to $2m$}
\mpar{v5}{Removed subscripts as $n=m$.}
$$R^{\Delta}:=\{z\in\R^{2n}\mid \Delta(z,z')=0 \ \mbox{for all } z'\in R\}$$
be the \emph{symplectic complement} of $R$.
By the CCR (\ref{CCR}),
$[W(z),W(z')]=0$ if 
$\Delta(z,z')=0$,\mpar{v4}{``and only if'' deleted, see above}
and it follows that $W(R^{\Delta})\subseteq W(R)'$.
Hence,
$$\cl E(\BLTRN)\subseteq W(R^{\Delta})'=(W(R^{\Delta})'')',$$
and we have the following result.

\begin{proposition}
Let $\cl E:\BLTRN\to\BLTRN$ be a linear bosonic channel, and let $R$ be the range of the matrix $K$ with
symplectic complement $R^\Delta$. Then the von Neumann algebra $W(R^{\Delta})''$ is private for $\cl E$.
\end{proposition}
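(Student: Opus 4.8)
The plan is to verify the defining inclusion for privacy directly. Writing $N=W(R^{\Delta})''$, we must show $\cl E(\BLTRN)\subseteq N'$. Since $(L')''=L'$ for every subset $L\subseteq\BLTRN$, the triple-commutant identity gives $N'=(W(R^{\Delta})'')'=W(R^{\Delta})'$, so it is enough to prove $\cl E(\BLTRN)\subseteq W(R^{\Delta})'$.

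First I would establish that $\cl E(\BLTRN)\subseteq W(R)''$. By the identity $\cl E(W(z))=f(z)W(Kz)$ and the fact that $Kz\in R$ for every $z\in\R^{2n}$, the operator $\cl E(W(z))$ is a scalar multiple of $W(Kz)$ and hence lies in the von Neumann algebra $W(R)''$ generated by $\{W(w)\mid w\in R\}$. To pass from the generators to the whole algebra I would invoke normality: the linear span of the Weyl operators is weak*-dense in $\BLTRN$ (the Schr\"odinger representation being irreducible), and $\cl E$ is weak*-weak* continuous, so $\cl E(\BLTRN)$ is contained in the weak*-closure of the image of this span, which in turn lies in the weak*-closed algebra $W(R)''$.

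Next I would show $W(R)''\subseteq W(R^{\Delta})'$. For $z\in R^{\Delta}$ and $w\in R$ we have $\Delta(z,w)=0$ by definition of the symplectic complement, so the Weyl--Segal relation (\ref{CCR}) forces $W(z)$ and $W(w)$ to commute. Thus $W(R^{\Delta})\subseteq W(R)'$, and taking commutants reverses this inclusion to $W(R)''\subseteq W(R^{\Delta})'$. Combining the two containments gives $\cl E(\BLTRN)\subseteq W(R)''\subseteq W(R^{\Delta})'=N'$, which is exactly the assertion that $N=W(R^{\Delta})''$ is private for $\cl E$.

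I expect the only step requiring genuine care to be the density-and-continuity argument in the second paragraph, where the pointwise containment $\cl E(W(z))\in W(R)''$ is promoted to $\cl E(\BLTRN)\subseteq W(R)''$; the remaining steps follow immediately from the commutation relations (\ref{CCR}) and the double-commutant formalism recalled in Section~\ref{s_prel}.
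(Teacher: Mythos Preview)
Your proposal is correct and follows essentially the same route as the paper: the paper argues (in the paragraph preceding the proposition) that $\cl E(\BLTRN)\subseteq W(R)''$, then uses the CCR to obtain $W(R^{\Delta})\subseteq W(R)'$, and concludes $\cl E(\BLTRN)\subseteq W(R^{\Delta})'=(W(R^{\Delta})'')'$. You have in fact been more explicit than the paper about the weak*-density and normality needed to pass from $\cl E(W(z))\in W(R)''$ to $\cl E(\BLTRN)\subseteq W(R)''$, a step the paper simply labels ``clear''.
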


\mpar{v9}{Proposition added to have statement stand out}%

\begin{example}\mpar{v10}{example environment added}
  For a simple example with $n=1$, \mpar{v10}{mention that $n=1$}%
  let~$S=\LTR$ and consider the class of single
  mode Gaussian channels $\cl E:\cl B(S)\to\cl B(S)$ satisfying \[\cl
  E(W(z))=f(z)W(Kz),\quad z = (x,y)\in\R^2\] where \[K=\begin{pmatrix}1&0 \\
    0&0\end{pmatrix} \qand
  f(z)=\exp\big(-\tfrac12\alpha(x^2+y^2)\big),\] with $\alpha =
  N_0+\tfrac12$ for some non-negative integer $N_0$.
  \mpar{v3}{Right?}\mpar{v4}{I guess the range of $W$ is weak*-dense
    in $\cl B(L_2)$, so the formula for $\cl E(W(z))$ has a unique
    quantum channel extension $\cl E$ defines on $\cl
    B(L_2)$?}\mpar{v5}{Yes.}%
  This class is known as $A_2$ in Holevo's classification of single
  mode Gaussian channels~\cite{h2}. In this case, the range of~$K$ is
  $R=\R\times\{0\}$ and $R^{\Delta}=R$, so \[W(R)''=\{V_x\mid
  x\in\R\}'' = L_{\infty}(\R)\] is private for $\cl E$, where we
  canonically identify $L_{\infty}(\R)$ with the (abelian) von Neumann
  subalgebra of $\cl B(S)$ consisting of multiplication operators by
  essentially bounded functions.

  By Theorem \ref{complsub}, $L_{\infty}(\R)$ is a correctable
  subalgebra for any complementary channel $\cl E^c$ of $\cl E$.  Let
  us show this explicitly by computing a correction channel~$\cl R$
  for one particular complement $\cl E^c$. First, one may easily
  verify \mpar{v4}{check this once definitions straightened out}%
  \mpar{v5}{Should all be in line}\mpar{v7}{Looks good!}%
  that the pre-adjoint $\cl E_*:\cl T(S)\to\cl T(S)$ can be
  represented as
  \[\cl E_*(\rho)=\Tr_E(U_T(\rho\ten\rho_E)U_T^*), \quad \rho\in\cl
  T(S),\] where $E$ is a copy of~$\LTR$ and $\rho_E\in\cl T(E)$ is the
  Gaussian state with characteristic function $\vphi_{\rho_E}=f$, and
  $T\in \Sp(4,\R)$ is given by the block matrix
  \[ T=
  \begin{pmatrix}
    K&-I\\I&K'
  \end{pmatrix}
  \] where $I$ is the $2\times 2$ identity matrix and $K'=\smallmat{0&0 \\
    0&1}$.

  The state $\rho_E$ is the Gibbs thermal state with mean photon
  number $N_0$, and is pure if and only if $N_0=0$.\mpar{v7}{reference
    for ``Gibbs thermal state'' terminology and purity claim?}  Thus,
  let $E'$ be another copy of~$\LTR$, and let $|\psi\ra\in E\ten E'$
  be a canonical purification of $\rho_E$, that is, $\rho_E =
  \Tr_{E'}(|\psi\ra\lan\psi|)$. \mpar{v3}{syntax corrected}%
  Then
  \[\cl E_*(\rho)=\Tr_{E\ten E'}((U_T\ten
  I_{E'})(\rho\ten|\psi\ra\lan\psi|)(U_T^*\ten I_{E'})), \quad
  \rho\in\cl T(S),\] so we can obtain a complement $\cl E^c:\cl
  B(E\otimes E')\to\cl B(S)$ whose pre-adjoint $\cl E^c_*$ is given by
  \[\cl E^c_*(\rho)=\Tr_{S}((U_T\ten
  I_{E'})(\rho\ten|\psi\ra\lan\psi|)(U_T^*\ten I_{E'})), \quad
  \rho\in\cl T(S).\]
  For $H$ a Hilbert space of the form $L^{2}(\bb{R}^{n})$, let us
  denote \mpar{v6}{$W_H$ explained.}%
  the corresponding Weyl \mpar{v10}{inserted ``corresponding Weyl''}%
  representation $W\colon \bb R^{2n}\to \cl B(H)$ by~$W_H$.  For
  $z,z'\in\R^2$ and $\rho\in\cl T(S)$, we have \mpar{v7}{Calculation
    shortened, hopefully still clear}%
  \begin{align*}
    \langle \cl E^c&(W_{E\ten E'}(z,z')),\rho\rangle\\
    &=\Tr\big((U_T\ten I_{E'})(\rho\ten|\psi\ra\lan\psi|)(U_T^*\ten I_{E'})(I_S\ten W_{E\ten E'}(z,z'))\big)\\
    &=\Tr\big((\rho\ten|\psi\ra\lan\psi|)(U_T^*\ten I_{E'})W_{S\ten E\ten E'}(0,z,z')(U_T\ten I_{E'})\big)\\
    &=\Tr\big((\rho\ten|\psi\ra\lan\psi|)W_{S\ten E\ten E'}(T(0,z),z')\big)\\
    &=\Tr\big((\rho\ten|\psi\ra\lan\psi|)W_{S\ten E\ten E'}(-z,(0,y),z')\big)\\
    &=\Tr\big(|\psi\ra\lan\psi|W_{E\ten E'}((0,y),z')\big)\cdot
    \langle W_S(-z),\rho\rangle.\end{align*} Since $\rho\in\cl T(S)$
  was arbitrary, it follows that
$$\cl E^c(W_{E\ten E'}(z,z'))=\Tr(|\psi\ra\lan\psi|W_{E\ten E'}((0,y),z'))W_S(-z), \quad z,z'\in\R^2.$$
Given the above structure of $\cl E^c$, it is clear that the map
\[\cl R:L_{\infty}(\R)\to\cl B(E\otimes E'),\quad \cl
R(W_S(x,0))=W_{E\ten E'}((-x,0),0),\quad x\in\R\]\mpar{v10}{displayed $\cl R$ definition}%
defines a quantum
channel satisfying $\cl E^c\circ\cl R=\id_{L_{\infty}(\R)}$.
\mpar{v7}{Remark: I guess $\cl R(g)=(x\mapsto g(-x))\otimes I$, $g\in
  L_\infty(\R)$}
\end{example}

\begin{remark}
\mpar{v9}{Seemed appropriate to turn this paragraph into a Remark.}
The symplectic matrix $T$ in the preceding example is not unique. Indeed, any symplectic block matrix of the form
$$\begin{pmatrix} K & \ast\\ I & \ast\end{pmatrix}$$
will do, as only the first column is relevant for the description of $\cl E$. In general,\mpar{v4}{reference for this?}
 \mpar{v5}{Added reference}
 if $A,B:\R^n\to\R^n$ satisfy $\Delta = A^t\Delta A+B^t\Delta B$ (so that the map
$z\mapsto Az\oplus Bz$ is a symplectic embedding), then the matrix
$$\begin{pmatrix} A & \ast\\ B & \ast\end{pmatrix}$$
can be completed to an element of $\Sp(2n,\R)$ (see \cite[Theorem 12.30]{h}). In particular, when $[A,B]=0$, which is the case in the above example, there is a canonical choice for matrices $C,D\in M_n(\R)$ turning
$$\begin{pmatrix} A & C\\ B & D\end{pmatrix}$$
into a symplectic matrix, namely $C=-B'$ and $D=A'$, where
$B'=\Delta^{-1}B^t\Delta$ and $A'=\Delta^{-1}A^t\Delta$ are the \emph{symplectic adjoints} of $A$ and $B$, respectively.
This is precisely how we chose $T$ above, and since the structure of $K'=\smallmat{0&0 \\ 0&1}$ was crucial in determining the recovery channel $\cl R$ (and the overall structure
of $\cl E^c$), the above example may be a glimpse of a deeper connection between complementarity and symplectic duality.
\end{remark}



\section{Conclusion}\label{s_con}

In this paper, we generalized the formalism of private subspaces and
private subsystems to the setting of von Neumann algebras using commutant structures, introduced a generalized
framework for studying complementarity of quantum channels, and established a
general complementarity theorem between operator private and correctable
subalgebras. This new framework is particularly amenable to the important class of linear bosonic channels, and our preliminary investigations suggests a deeper connection between
complementarity and symplectic duality. Moreover, since symplectic geometry has played a decisive role in the development of quantum error correcting codes \cite{crss}, it is
natural to develop such a formalism for private quantum codes via complementarity in both the finite and infinite-dimensional settings.
This, and related questions are currently being pursued and will appear in future work.
\mpar{v9}{At the proof stage I can add in some comments on other finite-dimensional avenues, like the quasiorthogonal algebra direction (and paper) I mentioned to Rupert-Jason, plus the non-operator avenue for the infinite-dim case.}%

\end{document}